\numberwithin{equation}{section}
\numberwithin{figure}{section}
\theoremstyle{plain}
\newtheorem{thm}{\protect\theoremname}
\theoremstyle{remark}
\newtheorem{rem}[thm]{\protect\remarkname}
\theoremstyle{plain}
\newtheorem{lem}[thm]{\protect\lemmaname}
\theoremstyle{plain}
\newtheorem{cor}[thm]{\protect\corollaryname}
\providecommand{\corollaryname}{Corollary}
\providecommand{\lemmaname}{Lemma}
\providecommand{\remarkname}{Remark}
\providecommand{\theoremname}{Theorem}
\begin{document}
\title[On computing bound states of the Dirac and Schr\"odinger Equations]{On computing bound states of the Dirac and Schr\"odinger Equations
}
\author{Gregory Beylkin$\,^{*}$, Joel Anderson$\,^{**}$ and Robert J. Harrison$\,^{**}$}
\address{$\,^{*}$Department of Applied Mathematics\\
University of Colorado at Boulder \\
UCB 526, Boulder, CO 80309-0526\\
~\\
$\,^{**}$Institute for Advanced Computational Science\\
Stony Brook University\\
Stony Brook, NY 11794-5250}
\begin{abstract}
We cast the quantum chemistry problem of computing bound states as
that of solving a set of auxiliary eigenvalue problems for a family
of parameterized compact integral operators. The compactness of operators
assures that their spectrum is discrete and bounded with the only
possible accumulation point at zero. We show that, by changing the
parameter, we can always find the bound states, i.e., the eigenfunctions
that satisfy the original equations and are normalizable. While for
the non-relativistic equations these properties may not be surprising,
it is remarkable that the same holds for the relativistic equations
where the spectrum of the original relativistic operators does not
have a lower bound. We demonstrate that starting from an arbitrary
initialization of the iteration leads to the solution, as dictated
by the properties of compact operators.
\end{abstract}

\thanks{The authors would like to thank Stony Brook Research Computing and
Cyberinfrastructure, and the Institute for Advanced Computational
Science (IACS) at Stony Brook University (SBU) for access to the high-performance
SeaWulf computing system, which was made possible by a \$1.4M National
Science Foundation grant (\#1531492). Anderson acknowledges financial
support from IACS at SBU, and Harrison acknowledges support from the
National Science Foundation grant OAC-1931387.}

\maketitle
\section{Introduction}

We develop, analyze, and demonstrate a method for computing bound
states for both the non-relativistic Schr\"odinger and relativistic
Dirac equations of quantum chemistry (QC). As is well-known, the equations
of quantum mechanics allow both bound and scattering states; \textit{inter
alia} this implies that, if the scattering states are present, then
it is impossible to construct a single self-adjoint operator that
would have only the bound states since the resulting basis of eigenvectors
would be incomplete. We show that, instead, there is an auxiliary
one parameter family of compact operators such that the sought bound
states are found by computing eigenfunctions of particular members
of this family. In fact, this family of operators constructed using
a parameterized Green's function and yielding (what we call) the integral
form of the Schr\"odinger equation is well-known as it has been used
in mathematical analysis since 1950s. It has long been established
that, for certain classes of potentials and for the parameter corresponding
to the bound states, the relevant operators are compact. A particularly
important for QC problems is the Rollnik class of potentials arrived
at and analyzed by several authors, see \cite{ROLLNI:1956,SCHWIN:1961,GRO-WU:1961,GRO-WU:1962,SC-WE-WR:1964,SIMON:1971,SIMON:2005}
and references therein. We elaborate on this later in the paper.

The early use of the integral form of the Schr\"odinger equation
for computational purposes was pioneered by Kalos using Monte-Carlo
approach \cite{KALOS:1963} and, more recently, employed in adaptive
multiresolution QC algorithms in \cite{H-F-Y-G-B:2004,A-S-H-B:2019}
(see also \cite{H-B-B-C-F-F-G-etc:2016} and references therein).
In this paper we revisit our approach for computing bound states by
solving the integral form of the Dirac equations of QC in \cite{A-S-H-B:2019}
(and that of the non-relativistic equations in \cite{H-F-Y-G-B:2004})
to demonstrate that convergence interpreted in these papers as being
local (thus requiring an initial guess sufficiently close to the desired
bound state eigenfunction) is, in fact, a robust global convergence.
We demonstrate this by showing that relevant operators are compact.

The integral form of the Schr\"odinger equation is obtained using
a parameterized Green's function yielding an auxiliary family of integral
operators. The parameter can be tuned and, at certain discrete values,
the corresponding operator yields the desired bound states. This approach
found its application in an adaptive multiresolution method for solving
non-relativistic equations of QC for computing bound states introduced
in \cite{H-F-Y-G-B:2004} and implemented in MADNESS (Multiresolution
ADaptive Numerical Environment for Scientific Simulation, see \cite{H-B-B-C-F-F-G-etc:2016}).
We note that since the spectrum of non-relativistic Hamiltonians of
QC is bounded from below, it is not surprising that the spectra of
all operators of the auxiliary family of integral operators are also
bounded from below as they turn out to be compact operators for all
values of the parameter. While it may appear that using a family of
integral operators instead of the single Hamiltonian is computationally
more expensive, the separated multiresolution representation of operators
in \cite{H-F-Y-G-B:2004} actually makes MADNESS a fast method for
accurate computations in QC and nuclear sciences \cite{H-B-B-C-F-F-G-etc:2016}.

Turning to the relativistic equations, it is well-known that the spectrum
of the Dirac Hamiltonian is not bounded from below so that the problem
of finding bound states can not be cast as that of minimization of
a quadratic form, the so-called Rayleigh quotient. As a consequence,
computing bound states by directly discretizing the relativistic Hamiltonian
requires additional numerical devices to guide computation to the
desired bound states. Thus, it is remarkable that for relativistic
equations of QC the associated auxiliary family of integral operators
consists of compact operators as long as the parameter --- the relativistic
energy --- is selected to be positive. We show that, by selecting
the positive relativistic energy, the equations yield only the desired
bound states so that there is no need to numerically exclude either
the positive-energy scattering states or all negative-energy states.

The integral form of the Dirac Hamiltonian was proposed in \cite{BLA-BAB:2013}
and used in \cite{A-S-H-B:2019} for computing relativistic energies
of bound states. In that context, we had originally interpreted convergence
of the integral iteration as being local --- i.e., requiring an initial
guess sufficiently close to the desired bound state eigenfunction.
Instead, in this paper we demonstrate, both theoretically and numerically,
robust global convergence to the sought bound state. We cast the iteration
that yields the desired bound states as a combination of the power
method and the Newton's method, which allows us to demonstrate that
we can always find an appropriate member of the family of integral
operators in order to obtain the desired bound state (in practical
computations these two iterations are usually to be combined to gain
speed).

We start by briefly describing the relevant one-electron non-relativistic
Schr\"odinger and relativistic Dirac equations of QC, and the corresponding
families of auxiliary parameterized integral operators. Next, we consider
the Hilbert spaces in which we seek solutions and discuss under which
conditions on the potentials the operators of interest are Hilbert-Schmidt
operators. We then show that the integral operators of the auxiliary
family of operators for both, non-relativistic and relativistic problems,
are Hilbert-Schmidt operators and, therefore, compact. Finally we
discuss an iteration to select a particular member of the family of
auxiliary operators in order to solve the original problem of finding
a bound state. Some of the proofs needed in the main text can be found
in Appendices~\ref{subsec:Appendix-A:-Compact A^*A}-\ref{subsec:Appendix-D:-Dependence of eig relativistic}.
\section{Integral form of equations of quantum chemistry}

We use examples of the one-electron Kohn-Sham \cite{PAR-YAN:1989}
equations and their relativistic counterparts in Dirac's formulation
with the point-charge Coulomb potential \cite{LIU:2010} to motivate
and demonstrate our approach. Methods of quantum chemistry differ
in how they replace the electron-electron interaction by the interaction
of an electron with an averaged field generated by all electrons;
however the singular part of the total potential, the Coulomb electron-nuclear
interaction, is the same in most methods \cite{LIU:2010,LIU:2020}.
The singularity does not arise if a finite-size model is adopted for
the nuclear charge distribution \cite{VIS-DYA:1997} instead of the
idealized point-charge model, although any smoothing only occurs within
the radius of the nuclear charge distribution that is $O\left(10^{-5}\right)$
smaller than the radius of the atomic charge distribution. Relativistic
pseudo-potentials \cite{DOL-CAO:2012} further smooth the potential
and also eliminate the most tightly-bound (i.e., core) electrons that
experience the strongest relativistic effects. However, all-electron
calculations with both point and finite nuclear models are still essential
in order to directly access the properties of core electrons in molecular
environments, to employ general relativistic Hamiltonians, and to
eliminate/assess the approximations inherent to pseudo-potentials.
In demonstrating compactness of a family of integral operators, the
Coulomb electron-nuclear interaction presents the main obstacle and,
for this reason, we can limit our discussion to these two examples
of QC equations.

\subsection{The Kohn-Sham equations}

Consider occupied orbitals $\psi_{i}\left(\mathbf{r}\right)$, $i=1,\dots,N$
defining the electron density 
\[
\rho\left(\mathbf{r}\right)=2\sum_{i=1}^{N}\left|\psi_{i}\left(\mathbf{r}\right)\right|^{2},
\]
which are the lowest $N$ eigenfunctions of the Kohn-Sham operator
\begin{equation}
\left(-\frac{1}{2}\Delta+V\left(\mathbf{r}\right)\right)\psi_{i}\left(\mathbf{r}\right)=E_{i}\psi_{i}\left(\mathbf{r}\right),\,\,\,i=1,\dots.N,\label{eq:Kohn-Sham}
\end{equation}
where 
\begin{equation}
V\left(\mathbf{r}\right)=V_{ext}\left(\mathbf{r}\right)+V_{c}\left(\mathbf{r}\right)+V_{xc}\left(\mathbf{r}\right).\label{eq: components of the potential}
\end{equation}
For molecules, the external potential includes the attraction of the
electrons to the nuclei,
\begin{equation}
V_{ext}\left(\mathbf{r}\right)=-\sum_{\alpha}\frac{Z_{\alpha}}{\left\Vert \mathbf{r}-\mathbf{R}_{\alpha}\right\Vert },\label{eq:nuclearpotential}
\end{equation}
($Z_{\alpha}$ and $\boldsymbol{R}_{\alpha}$being the nuclear charge
and position, respectively). The Coulomb potential describes the repulsion
between electrons, 
\[
V_{c}\left(\mathbf{r}\right)=\int_{\mathbb{R}^{3}}\frac{\rho\left(\mathbf{r}'\right)}{\left\Vert \mathbf{r}-\mathbf{r}'\right\Vert }d\mathbf{r}'
\]
and the exchange-correlation potential $V_{xc}$ that in this work
is taken to be a scalar that within the generalized gradient approximation
(GGA) depends on $\rho$ and its derivatives at that point. Non-local
potentials, such as the Hartree-Fock exchange potential \cite{PAR-YAN:1989},
can be included as well. The Hartree-Fock exchange potential has singularities
at locations of nuclei (e.g. cusps in the non-relativistic case),
but these singularities are necessarily weaker than those of the external
potential $V_{ext}$ at the same locations. Our estimates depend only
on the slowest decay exhibited by the total potential in momentum
space, which are due to the stronger singularity of $V_{ext}$.

Introducing the Green's function
\begin{equation}
\left(-\Delta+\mu^{2}\right)G_{\mu}\left(\mathbf{r},\mathbf{r}'\right)=\delta\left(\mathbf{r}-\mathbf{r}'\right),\,\,\,\,\,\,G_{\mu}\left(\mathbf{r},\mathbf{r}'\right)=\frac{e^{-\mu\left\Vert \mathbf{r}-\mathbf{r}'\right\Vert }}{4\pi\left\Vert \mathbf{r}-\mathbf{r}'\right\Vert },\label{eq:Green's function, non-rel case}
\end{equation}
we consider the auxiliary coupled eigenvalue problems
\begin{equation}
\lambda_{\mu_{i}}\psi_{\mu_{i}}=-2G_{\mu_{i}}V\psi_{\mu_{i}},\,\,\,\,i=1,\dots,N.\label{eq:auxiliary eigenvalue problem Kohn-Sham}
\end{equation}
If $\mu_{i}^{2}=-2E_{i}$, then $\lambda_{\mu_{i}}=1$ and functions
$\psi_{\mu_{i}}$ also solve (\ref{eq:Kohn-Sham}). Introducing functions
\[
\phi_{\mu_{i}}=G_{\mu_{i}}^{-1/2}\psi_{\mu_{i}},
\]
the auxiliary eigenvalue problem (\ref{eq:auxiliary eigenvalue problem Kohn-Sham})
becomes 
\begin{equation}
\lambda_{\mu_{i}}\phi_{\mu_{i}}=-2G_{\mu_{i}}^{1/2}VG_{\mu_{i}}^{1/2}\phi_{\mu_{i}},\,\,\,\,i=1,\dots,N.\label{eq:auxiliary eigenvalue problem Kohn-Sham-1}
\end{equation}
In what follows, we consider operators $VG_{\mu_{i}}^{1/2}$ and show
that, by choosing an appropriate Hilbert space for the solutions (\ref{eq:auxiliary eigenvalue problem Kohn-Sham}),
these operators are compact.
\begin{rem}
\label{rem: neg.definite, non-relativistic}In the analysis that follows,
we want the potential $V$ in (\ref{eq:auxiliary eigenvalue problem Kohn-Sham})
(and therefore in (\ref{eq:auxiliary eigenvalue problem Kohn-Sham-1}))
to be a negative definite multiplication operator. While $V_{ext}$
is negative, the additional components of the total potential, $V_{c}\left(\mathbf{r}\right)$
and $V_{xc}\left(\mathbf{r}\right)$, may possibly violate this property
of $V$ in some subdomains. In such case, as long as $V_{c}\left(\mathbf{r}\right)$
and $V_{xc}\left(\mathbf{r}\right)$ are bounded, we can modify the
derivation of (\ref{eq:auxiliary eigenvalue problem Kohn-Sham}) and
(\ref{eq:auxiliary eigenvalue problem Kohn-Sham-1}) by first shifting
the spectrum 
\begin{equation}
\left(-\frac{1}{2}\Delta+V\left(\mathbf{r}\right)-\tau\right)\psi_{i}\left(\mathbf{r}\right)=\left(E_{i}-\tau\right)\psi_{i}\left(\mathbf{r}\right),\,\,\,i=1,\dots.N,\label{eq:Kohn-Sham-shifted}
\end{equation}
where $\tau>0$ is a sufficiently large shift so that
\[
V\left(\mathbf{r}\right)-\tau<0,
\]
and considering 
\begin{equation}
\lambda_{\mu_{i}}\psi_{\mu_{i}}=-2G_{\mu_{i}}\left(V-\tau\right)\psi_{\mu_{i}}.\label{eq:shifted integral eq}
\end{equation}
While the shifted potential is not zero at infinity, for as long as
the eigenfunctions $\psi_{\mu_{i}}$ of (\ref{eq:Kohn-Sham}) decay
exponentially (as it is the case for the Schr\"odinger equation,
see \cite{AGMON:1985}), the integrals in (\ref{eq:shifted integral eq})
are well defined. In fact, in solving (\ref{eq:Kohn-Sham}) numerically,
the functions $\psi_{\mu_{i}}$ are considered to be non-zero only
in a bounded domain. In (\ref{eq:shifted integral eq}), if $\mu_{i}^{2}=-2\left(E_{i}-\tau\right)$,
then $\lambda_{\mu_{i}}=1$ and functions $\psi_{\mu_{i}}$ also solve
(\ref{eq:Kohn-Sham}). So far in our computations we did not encounter
a need to shift the spectrum. However, without loss of generality,
it is important to consider $V$ to be a negative definite multiplication
operator.
\end{rem}

\subsection{Integral form of Dirac's equations}

An orbital (one-particle eigenfunction) of the relativistic Dirac
equations is a four-component vector-function (i.e., a spinor) $\psi$
which satisfies
\begin{equation}
\mathcal{H}\psi=E\psi\label{eq:eigenvalue problem Dirac's eq}
\end{equation}
where
\begin{equation}
\mathcal{H}=\mathcal{H}_{0}+\mathcal{V}\left(\mathbf{r}\right),\label{eq:Dirac hamiltonian}
\end{equation}
\[
\mathcal{H}_{0}=\frac{\hbar c}{i}\left(\alpha_{1}\frac{\partial}{\partial x_{1}}+\alpha_{2}\frac{\partial}{\partial x_{2}}+\alpha_{3}\frac{\partial}{\partial x_{3}}\right)+\beta mc^{2},
\]
\[
\alpha_{1}=\left(\begin{array}{cc}
\mathbf{0} & \sigma_{1}\\
\sigma_{1} & \mathbf{0}
\end{array}\right),\,\,\,\alpha_{2}=\left(\begin{array}{cc}
\mathbf{0} & \sigma_{2}\\
\sigma_{2} & \mathbf{0}
\end{array}\right),\,\,\,\alpha_{3}=\left(\begin{array}{cc}
\mathbf{0} & \sigma_{3}\\
\sigma_{3} & \mathbf{0}
\end{array}\right)\,\,\,\beta=\left(\begin{array}{cc}
\sigma_{0} & \mathbf{0}\\
\mathbf{0} & -\sigma_{0}
\end{array}\right),
\]
and

\[
\sigma_{0}=\left(\begin{array}{cc}
1 & 0\\
0 & 1
\end{array}\right),\,\,\,\sigma_{1}=\left(\begin{array}{cc}
0 & 1\\
1 & 0
\end{array}\right),\,\,\,\sigma_{2}=\left(\begin{array}{cc}
0 & -i\\
i & 0
\end{array}\right),\,\,\,\sigma_{3}=\left(\begin{array}{cc}
1 & 0\\
0 & -1
\end{array}\right),
\]
are the Pauli matrices such that 
\[
\sigma_{1}^{2}=\sigma_{2}^{2}=\sigma_{3}^{2}=\sigma_{0},
\]
and 
\[
\sigma_{1}\sigma_{2}=i\sigma_{3},\,\,\,\sigma_{2}\sigma_{3}=i\sigma_{1},\,\,\,\sigma_{3}\sigma_{1}=i\sigma_{2}.
\]
The matrix-potential operator is 
\begin{equation}
\mathcal{V}\left(\mathbf{r}\right)=\left(\begin{array}{cccc}
V & 0 & 0 & 0\\
0 & V & 0 & 0\\
0 & 0 & V & 0\\
0 & 0 & 0 & V
\end{array}\right),\label{eq:relativistic potential}
\end{equation}
where the potential $V$ has several components as in \ref{eq: components of the potential}.
As in the non-relativistic case, we can consider a system of equations
involving $N$ orbitals where the electron-electron interactions are
captured as an interaction of an electron with an averaged potential
generated by all electrons. As in the non-relativistic Kohn-Sham equations,
we can assign the four-component orbitals to each electron and consider
a system of coupled eigenvalue problems. Avoiding complicating notations,
without loss of generality, we only consider operators that involve
a single orbital.

Setting 
\[
\mathcal{I}=\left(\begin{array}{cc}
\sigma_{0} & \mathbf{0}\\
\mathbf{0} & \sigma_{0}
\end{array}\right),
\]
we rewrite (\ref{eq:eigenvalue problem Dirac's eq}) as
\begin{equation}
\left(\mathcal{H}_{0}-E\mathcal{I}\right)\psi=-\mathcal{V}\left(\mathbf{r}\right)\psi,\label{eq:bound state problem}
\end{equation}
where
\[
\mathcal{H}_{0}-E\mathcal{I}=\left(\begin{array}{cc}
\sigma_{0}mc^{2} & \frac{\hbar c}{i}\left(\sigma_{1}\frac{\partial}{\partial x_{1}}+\sigma_{2}\frac{\partial}{\partial x_{2}}+\sigma_{3}\frac{\partial}{\partial x_{3}}\right)\\
\frac{\hbar c}{i}\left(\sigma_{1}\frac{\partial}{\partial x_{1}}+\sigma_{2}\frac{\partial}{\partial x_{2}}+\sigma_{3}\frac{\partial}{\partial x_{3}}\right) & -\sigma_{0}mc^{2}
\end{array}\right)-\left(\begin{array}{cc}
\sigma_{0} & \mathbf{0}\\
\mathbf{0} & \sigma_{0}
\end{array}\right)E.
\]
Following \cite{BLA-BAB:2013} and computing 
\[
\left(\mathcal{H}_{0}-E\mathcal{I}\right)\left(\mathcal{H}_{0}+E\mathcal{I}\right)=\mathcal{H}_{0}^{2}-E^{2}\mathcal{I},
\]
we have on the off-diagonal of $\mathcal{H}_{0}^{2}$ 
\[
\sigma_{0}mc^{2}\frac{\hbar c}{i}\left(\sigma_{1}\frac{\partial}{\partial x_{1}}+\sigma_{2}\frac{\partial}{\partial x_{2}}+\sigma_{3}\frac{\partial}{\partial x_{3}}\right)-\frac{\hbar c}{i}\left(\sigma_{1}\frac{\partial}{\partial x_{1}}+\sigma_{2}\frac{\partial}{\partial x_{2}}+\sigma_{3}\frac{\partial}{\partial x_{3}}\right)\sigma_{0}mc^{2}=\mathbf{0}
\]
and on the diagonal 
\[
\sigma_{0}m^{2}c^{4}-\hbar^{2}c^{2}\left(\sigma_{1}\frac{\partial}{\partial x_{1}}+\sigma_{2}\frac{\partial}{\partial x_{2}}+\sigma_{3}\frac{\partial}{\partial x_{3}}\right)^{2}=\sigma_{0}\left(m^{2}c^{4}-\hbar^{2}c^{2}\Delta\right).
\]
Therefore, we have
\[
\mathcal{H}_{0}^{2}-E^{2}\mathcal{I}=\mathcal{I}\left(m^{2}c^{4}-\hbar^{2}c^{2}\Delta-E^{2}\right)
\]
and
\[
\left(\mathcal{H}_{0}+E\mathcal{I}\right)^{-1}\left(\mathcal{H}_{0}-E\mathcal{I}\right)^{-1}=\left(\mathcal{H}_{0}^{2}-E^{2}\mathcal{I}\right)^{-1}=\mathcal{I}\left(-\hbar^{2}c^{2}\Delta+m^{2}c^{4}-E^{2}\right)^{-1}
\]
so that
\[
\left(\mathcal{H}_{0}-E\mathcal{I}\right)^{-1}=\left(\mathcal{H}_{0}+E\mathcal{I}\right)\mathcal{I}\left(-\hbar^{2}c^{2}\Delta+m^{2}c^{4}-E^{2}\right)^{-1}.
\]
As a result, we obtain from (\ref{eq:bound state problem})
\[
\psi=-\frac{1}{\hbar^{2}c^{2}}\left(\mathcal{H}_{0}+E\mathcal{I}\right)\mathcal{I}\left(-\Delta+\frac{m^{2}c^{4}-E^{2}}{c^{2}\hbar^{2}}\right)^{-1}\mathcal{V}\psi.
\]
Noting that for bound states $E<mc^{2}$, we set 
\begin{equation}
\kappa=\frac{\sqrt{m^{2}c^{4}-E^{2}}}{c\hbar},\,\,\,\kappa>0,\label{eq:kappa via E}
\end{equation}
and
\begin{equation}
\left(-\Delta+\frac{m^{2}c^{4}-E^{2}}{c^{2}\hbar^{2}}\right)^{-1}\left(\mathbf{r}-\mathbf{r}'\right)=G\left(\kappa,\mathbf{r}-\mathbf{r}'\right)=\frac{1}{4\pi}\frac{e^{-\kappa\left\Vert \mathbf{r}-\mathbf{r}'\right\Vert }}{\left\Vert \mathbf{r}-\mathbf{r}'\right\Vert }.\label{eq:bstateHelmholtz-GF}
\end{equation}
where the Green's function (\ref{eq:bstateHelmholtz-GF}) solves 
\begin{equation}
\left(-\Delta+\kappa^{2}\right)G\left(\kappa,\mathbf{r}-\mathbf{r}'\right)=\delta\left(\mathbf{r}-\mathbf{r}'\right).\label{eq:BstateHelmholtz}
\end{equation}
We consider an auxiliary eigenvalue problem

\begin{equation}
\lambda\left(\kappa\right)\psi\left(\kappa\right)=-\frac{1}{\hbar^{2}c^{2}}\left(\mathcal{H}_{0}+E\left(\kappa\right)\mathcal{I}\right)\mathcal{G}\left(\kappa\right)\mathcal{V}\psi\left(\kappa\right),\label{eq:auxiliary equation for psi}
\end{equation}
where
\begin{equation}
\mathcal{G}\left(\kappa,\mathbf{r}-\mathbf{r}'\right)=\mathcal{I}G\left(\kappa,\mathbf{r}-\mathbf{r}'\right).\label{eq:Green's function rel case}
\end{equation}
Note that if $\kappa$ is as in (\ref{eq:kappa via E}), where $E$
is an eigenvalue of (\ref{eq:eigenvalue problem Dirac's eq}), then
$\lambda\left(\kappa\right)=1$ and the solution of (\ref{eq:auxiliary equation for psi})
$\psi\left(\kappa\right)$ also solves (\ref{eq:eigenvalue problem Dirac's eq}).
It is convenient to introduce a four component function $\varphi\left(\kappa\right)=\mathcal{G}^{-1/2}\left(\kappa\right)\psi\left(\kappa\right)$
to modify (\ref{eq:auxiliary equation for psi}) so that
\begin{equation}
\lambda\left(\kappa\right)\varphi\left(\kappa\right)=\mathcal{A}\left(\kappa\right)\varphi\left(\kappa\right),\label{eq:auxiliary eq for phi}
\end{equation}
where
\begin{equation}
\mathcal{A}\left(\kappa\right)=-\frac{1}{\hbar^{2}c^{2}}\left(\mathcal{H}_{0}+E\left(\kappa\right)\mathcal{I}\right)\mathcal{G}^{1/2}\left(\kappa\right)\mathcal{V}\mathcal{G}^{1/2}\left(\kappa\right).\label{eq:the main operator}
\end{equation}
We show further below that, in the appropriately chosen Hilbert spaces,
the operator 
\[
-\frac{1}{\hbar^{2}c^{2}}\left(\mathcal{H}_{0}+E\mathcal{I}\right)\mathcal{G}^{1/2}
\]
is bounded, the operator 
\[
\mathcal{V}\mathcal{G}^{1/2}\left(\kappa\right)
\]
is compact, and the spectrum of the operator $\mathcal{A}$ is real.
\begin{rem}
\label{rem:negative definite, relativistic}As in the non-relativistic
case (see Remark~(\ref{rem: neg.definite, non-relativistic})), we
want the matrix potential $\mathcal{V}$ in (\ref{eq:relativistic potential})
to be negative definite. Assuming that $V_{c}\left(\mathbf{r}\right)$
and $V_{xc}\left(\mathbf{r}\right)$ are bounded and using an appropriate
shift $\tau$ of the spectrum in our derivation of (\ref{eq:auxiliary equation for psi})
and (\ref{eq:auxiliary eq for phi}), we write 
\begin{equation}
\left(\mathcal{H}_{0}+\mathcal{V}\left(\mathbf{r}\right)-\tau\mathcal{I}\right)\psi=\left(E-\tau\right)\psi,\label{eq:eigenvalue problem Dirac's  shifted}
\end{equation}
where $\tau$ is sufficiently large. In such case $E-\tau<E<mc^{2}$
and we set 
\begin{equation}
\kappa=\frac{\sqrt{m^{2}c^{4}-\left(E-\tau\right)^{2}}}{c\hbar},\,\,\,\kappa>0.\label{eq:kappa via E-1}
\end{equation}
We then have 
\begin{equation}
\mathcal{A}\left(\kappa\right)=-\frac{1}{\hbar^{2}c^{2}}\left(\mathcal{H}_{0}+E\left(\kappa\right)\mathcal{I}\right)\mathcal{G}^{1/2}\left(\kappa\right)\left(\mathcal{V}-\tau\mathcal{I}\right)\mathcal{G}^{1/2}\left(\kappa\right).\label{eq:the main operator-2}
\end{equation}
As in the non-relativistic case, components of $\varphi$ in (\ref{eq:auxiliary eq for phi})
are computationally considered to be non-zero in a bounded domain.
Again, so far in our computations we did not encounter a need to shift
the spectrum. The shift of the spectrum (if it were needed) allows
us to consider $\mathcal{V}$ to be a negative definite matrix multiplication
operator without loss of generality.
\end{rem}

\section{Hilbert spaces for solutions of equations of quantum chemistry}

For our analysis it is convenient to consider equations of quantum
chemistry in momentum space, where the Green's function component
in (\ref{eq:Green's function rel case}) is
\[
G\left(\kappa,\left\Vert \mathbf{p}\right\Vert \right)=\left(\kappa^{2}+\left\Vert \mathbf{p}\right\Vert ^{2}\right)^{-1}.
\]
If $\widehat{\psi}\left(\mathbf{p}\right)$ is a component of the
spinor solution of the Dirac equation, we require

\[
\widehat{\varphi}\left(\mathbf{p}\right)=\left(\kappa^{2}+\left\Vert \mathbf{p}\right\Vert ^{2}\right)^{1/2+\delta/2}\widehat{\psi}\left(\mathbf{p}\right)\in L^{2}\left(\mathbb{R}^{3}\right),
\]
where $\kappa>0$ and $\delta>0$, i.e. the function $\psi$ belongs
to the Hilbert space $\mathscr{H}_{\kappa,\delta}$ with the weighted
inner product 
\begin{equation}
\left\langle \psi_{1},\psi_{2}\right\rangle _{\kappa,\delta}=\int_{\mathbb{R}^{3}}\left(\kappa^{2}+\left\Vert \mathbf{p}\right\Vert ^{2}\right)^{1+\delta}\widehat{\psi_{1}}\left(\mathbf{p}\right)\overline{\widehat{\psi_{2}}\left(\mathbf{p}\right)}d\mathbf{p}\label{eq:weighted inner product}
\end{equation}
and the corresponding norm,
\begin{equation}
\left\Vert \psi\right\Vert _{\kappa,\delta}=\left(\int_{\mathbb{R}^{3}}\left(\kappa^{2}+\left\Vert \mathbf{p}\right\Vert ^{2}\right)^{1+\delta}\left|\widehat{\psi}\left(\mathbf{p}\right)\right|^{2}d\mathbf{p}\right)^{1/2}.\label{eq:weighed norm}
\end{equation}
For non-relativistic equations this condition (for $\widehat{\psi_{\mu}}\left(\mathbf{p}\right)$
in (\ref{eq:auxiliary eigenvalue problem Kohn-Sham})) is easily satisfied
since the worst singularity of a solution is a cusp at the location
of a nuclei, e.g. $e^{-\left\Vert \mathbf{r}\right\Vert }$, which
in momentum space corresponds to $\left(1+\left\Vert \mathbf{p}\right\Vert ^{2}\right)^{-2}$.
Consequently, for large $\left\Vert \mathbf{p}\right\Vert $, the
asymptotic rate of decay of non-relativistic bound states in momentum
space is $\left\Vert \mathbf{p}\right\Vert ^{-4}$ which is sufficient
to keep the integral in (\ref{eq:weighed norm}) finite. Note that
the eigenfunctions with a polynomial factor that is zero at the origin
decay even faster in momentum space.

The solutions of Dirac's equation have a stronger singularity at the
location of a nuclei, e.g.
\begin{equation}
\left\Vert \mathbf{r}\right\Vert ^{\gamma\left(Z\right)-1}e^{-\left\Vert \mathbf{r}\right\Vert },\label{eq:Dirac singularity in the momentum space}
\end{equation}
with
\[
\gamma\left(Z\right)=\left(1-\left(\frac{Z}{c}\right)^{2}\right)^{1/2},\,\,\,\,0<\gamma\left(Z\right)<1,
\]
where $Z$ is the charge of the nucleus and $c$ is the speed of light,
$c\approx137.035999084$ in atomic units (see e.g. \cite[Section 2.3]{KUTZEL:1989},
\cite[Section 3.1]{KUTZEL:1989a}). Note that to estimate the decay
we cannot use a stronger singularity $\left\Vert \mathbf{r}\right\Vert ^{-1}e^{-\left\Vert \mathbf{r}\right\Vert }$
instead of (\ref{eq:Dirac singularity in the momentum space}) since,
in momentum space, it corresponds to $\left(1+\left\Vert \mathbf{p}\right\Vert ^{2}\right)^{-1}$and
this rate of decay is too slow to keep (\ref{eq:weighed norm}) finite.
Therefore, we need to estimate the rate of decay of solutions of Dirac's
equations in momentum space for (\ref{eq:Dirac singularity in the momentum space})
directly. Computing the Fourier transform of (\ref{eq:Dirac singularity in the momentum space}),
we obtain (see \cite[Eq.3.381.5]{GRA-RYZ:2007})
\begin{eqnarray*}
\psi_{0}\left(p\right) & = & \int_{\mathbb{R}^{3}}\left\Vert \mathbf{r}\right\Vert ^{\gamma\left(Z\right)-1}e^{-\left\Vert \mathbf{r}\right\Vert }e^{-i\mathbf{r}\cdot\mathbf{p}}d\mathbf{r}\\
 & = & \frac{4\pi}{p}\int_{0}^{\infty}e^{-r}r^{\gamma\left(Z\right)}\sin\left(pr\right)dr\\
 & = & \frac{4\pi}{p}\frac{\Gamma\left(1+\gamma\left(Z\right)\right)\sin\left[\left(1+\gamma\left(Z\right)\right)\arctan\left(p\right)\right]}{\left(1+p^{2}\right)^{1/2+\gamma\left(Z\right)/2}},
\end{eqnarray*}
where $r=\left\Vert \mathbf{r}\right\Vert $ and $p=\left\Vert \mathbf{p}\right\Vert $.
Since $\gamma\left(Z\right)<1$ and for large $p$
\[
\sin\left[\left(1+\gamma\left(Z\right)\right)\arctan\left(p\right)\right]=\sin\left[\frac{1}{2}\left(1+\gamma\left(Z\right)\right)\pi\right]-\frac{1}{p}\left(1+\gamma\left(Z\right)\right)\cos\left[\frac{1}{2}\left(1+\gamma\left(Z\right)\right)\pi\right]+\mathcal{O}\left(\frac{1}{p^{2}}\right),
\]
we obtain 
\[
\psi_{0}\left(p\right)=\frac{4\pi}{p}\frac{\Gamma\left(1+\gamma\left(Z\right)\right)\sin\left[\frac{1}{2}\left(1+\gamma\left(Z\right)\right)\pi\right]}{\left(1+p^{2}\right)^{1/2+\gamma\left(Z\right)/2}}+\mathcal{O}\left(\frac{1}{p^{3+\gamma\left(Z\right)}}\right).
\]
Estimating the norm (\ref{eq:weighed norm}), we observe that the
integrand in (\ref{eq:weighed norm}) behaves as 
\[
\frac{1}{p^{2+2\gamma\left(Z\right)-2\delta}}
\]
for large $p$. For convergence we need $2+2\gamma\left(Z\right)-2\delta>3$
or

\[
\gamma\left(Z\right)>\frac{1}{2}+\delta.
\]
Since $\gamma\left(Z\right)$ is a monotone function and $\gamma\left(118\right)=0.508457$,
we conclude that, for a sufficiently small $\delta>0$, we can use
the Hilbert space $\mathscr{H}_{\kappa,\delta}$ with the weighted
inner product (\ref{eq:weighted inner product}) as a space for the
bound states of the Dirac's equations for nuclei with charges $1\le Z\le118$.

Our interest in considering the Hilbert space $\mathscr{H}_{\kappa,\delta}$
is more theoretical than practical. In all practical computations
the singularity of the nuclear potentials is removed either explicitly
or implicitly as a result of either using a finite computational basis
or grid or through the use of a more physical finite charge distribution
of the nucleus. As we discuss next, if we consider solutions in $\mathscr{H}_{\kappa,0}$
(i.e. set $\delta=0$ in (\ref{eq:weighted inner product})), then
the operators we construct are compact for an \textit{arbitrarily
accurate} \textit{approximation} of the Coulomb potential for any
nuclei charge $Z<c$. It turns out that by considering solutions in
$\mathscr{H}_{\kappa,\delta}$, for nuclei charges $1\le Z\le118$
the operators in question are compact for the Coulomb potential itself,
without any approximation. In any case, the practical impact of our
considerations is that the spectrum of the family of operators of
the auxiliary eigenvalue problems (\ref{eq:auxiliary eigenvalue problem Kohn-Sham-1})
and (\ref{eq:auxiliary eq for phi}) is always discrete and bounded
from below which, in turn, assures convergence of an iterative approach
for computing the bound states.

\section{The Hilbert-Schmidt operators}

We start by considering the matrix operator $\mathcal{V}\mathcal{G}^{1/2}$
which, in the momentum space, can be written as 
\begin{equation}
\left[\mathcal{V}\mathcal{G}^{1/2}\right]\left(\kappa,\mathbf{p},\mathbf{p}'\right)=\mathcal{I}\frac{\widehat{V}\left(\mathbf{p}-\mathbf{p}'\right)}{\left(\kappa^{2}+\left\Vert \mathbf{p}'\right\Vert ^{2}\right)^{1/2}},\label{eq:relativistic combination delta=00003D0}
\end{equation}
acting on functions 
\[
\widehat{\varphi}\left(\mathbf{p}\right)=\left(\kappa^{2}+\left\Vert \mathbf{p}\right\Vert ^{2}\right)^{1/2}\widehat{\psi}\left(\mathbf{p}\right)\in L^{2}\left(\mathbb{R}^{3}\right),
\]
or $\psi\in\mathscr{H}_{\kappa,0}.$ Alternatively, we can consider
the matrix operator
\begin{equation}
\left[\mathcal{V}\mathcal{G}^{1/2}\right]_{\delta}\left(\kappa,\mathbf{p},\mathbf{p}'\right)=\mathcal{I}\frac{\widehat{V}\left(\mathbf{p}-\mathbf{p}'\right)}{\left(\kappa^{2}+\left\Vert \mathbf{p}'\right\Vert ^{2}\right)^{1/2+\delta/2}},\label{eq:relativistic combination delta > 0}
\end{equation}
acting on functions 
\[
\widehat{\varphi}\left(\mathbf{p}\right)=\left(\kappa^{2}+\left\Vert \mathbf{p}\right\Vert ^{2}\right)^{1/2+\delta/2}\widehat{\psi}\left(\mathbf{p}\right)\in L^{2}\left(\mathbb{R}^{3}\right),
\]
or $\psi\in\mathscr{H}_{\kappa,\delta}$, a class of functions $\psi$
decaying slightly faster in the momentum space.

Combining these operators with their Hermitian adjoints which we denote
by $\,^{*},$ we obtain

\[
\left[\mathcal{V}\mathcal{G}^{1/2}\right]^{*}\left[\mathcal{V}\mathcal{G}^{1/2}\right]=\mathcal{G}^{1/2}\mathcal{V}^{2}\mathcal{G}^{1/2}
\]
and 
\[
\left[\mathcal{V}\mathcal{G}^{1/2}\right]_{\delta}^{*}\left[\mathcal{V}\mathcal{G}^{1/2}\right]_{\delta}=\mathcal{G}_{\delta}^{1/2}\mathcal{V}^{2}\mathcal{G}_{\delta}^{1/2}.
\]
Our goal is to show that the matrix operators $\mathcal{V}\mathcal{G}^{1/2}$
and $\left[\mathcal{V}\mathcal{G}^{1/2}\right]_{\delta}$ are compact;
for this we rely on Lemma~\ref{thm:A*A} (see Appendix~\ref{subsec:Appendix-A:-Compact A^*A})
showing that the compactness of operators $\mathcal{G}^{1/2}\mathcal{V}^{2}\mathcal{G}^{1/2}$
(or $\mathcal{G}_{\delta}^{1/2}\mathcal{V}^{2}\mathcal{G}_{\delta}^{1/2}$)
implies compactness of $\left[\mathcal{V}\mathcal{G}^{1/2}\right]$
(or $\left[\mathcal{V}\mathcal{G}^{1/2}\right]_{\delta}$), respectively.

We study components of these matrix operators
\[
G^{1/2}V^{2}G^{1/2},
\]
and 
\[
G_{\delta}^{1/2}V^{2}G_{\delta}^{1/2},
\]
which have the kernels
\begin{equation}
K\left(\kappa,\mathbf{p},\mathbf{p}'\right)=\frac{\widehat{V^{2}}\left(\mathbf{p}-\mathbf{p}'\right)}{\left(\kappa^{2}+\left\Vert \mathbf{p}\right\Vert ^{2}\right)^{1/2}\left(\kappa^{2}+\left\Vert \mathbf{p}'\right\Vert ^{2}\right)^{1/2}}\label{eq:kernel (with delta=00003D0)}
\end{equation}
and
\begin{equation}
K_{\delta}\left(\kappa,\mathbf{p},\mathbf{p}'\right)=\frac{\widehat{V^{2}}\left(\mathbf{p}-\mathbf{p}'\right)}{\left(\kappa^{2}+\left\Vert \mathbf{p}\right\Vert ^{2}\right)^{1/2+\delta/2}\left(\kappa^{2}+\left\Vert \mathbf{p}'\right\Vert ^{2}\right)^{1/2+\delta/2}}.\label{eq:kernel delta}
\end{equation}

\subsection{The Rollnik class of potentials}

We have
\begin{thm}
\label{thm:()-If-potential V^2}(\cite[Theorem I.22]{SIMON:1971})
If potential $V^{2}$ is in the Rollnik class, then the operator $G^{1/2}V^{2}G^{1/2}$
with the kernel \textbf{$K\left(\kappa,\mathbf{p},\mathbf{p}'\right)$}
is a bounded Hilbert-Schmidt operator.
\end{thm}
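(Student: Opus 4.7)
The plan is to compute the Hilbert--Schmidt norm of $G^{1/2}V^{2}G^{1/2}$ directly in momentum space and bound it by the Rollnik norm of $V^{2}$, following the standard route (essentially Simon's argument).

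First, the operator in question is an integral operator with kernel $K(\kappa,\mathbf{p},\mathbf{p}')$ in (\ref{eq:kernel (with delta=00003D0)}), so
\[
\|G^{1/2}V^{2}G^{1/2}\|_{HS}^{2}=\int_{\mathbb{R}^{3}}\!\int_{\mathbb{R}^{3}}\frac{|\widehat{V^{2}}(\mathbf{p}-\mathbf{p}')|^{2}}{(\kappa^{2}+\|\mathbf{p}\|^{2})(\kappa^{2}+\|\mathbf{p}'\|^{2})}\,d\mathbf{p}\,d\mathbf{p}'.
\]
The first step is to change variables $\mathbf{q}=\mathbf{p}-\mathbf{p}'$, separating the difference $\widehat{V^{2}}(\mathbf{q})$ from a $\mathbf{q}$-parameterized convolution factor
\[
F(\kappa,\mathbf{q})=\int_{\mathbb{R}^{3}}\frac{d\mathbf{p}'}{(\kappa^{2}+\|\mathbf{p}'+\mathbf{q}\|^{2})(\kappa^{2}+\|\mathbf{p}'\|^{2})}.
\]

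Next I would evaluate $F$ using the Feynman parameter identity $\tfrac{1}{AB}=\int_{0}^{1}d\alpha/(\alpha A+(1-\alpha)B)^{2}$. After a translation in $\mathbf{p}'$, the denominator becomes $(\kappa^{2}+\|\mathbf{p}'\|^{2}+\alpha(1-\alpha)\|\mathbf{q}\|^{2})^{2}$, and the standard identity $\int d\mathbf{p}'/(a^{2}+\|\mathbf{p}'\|^{2})^{2}=\pi^{2}/a$ in $\mathbb{R}^{3}$ gives
\[
F(\kappa,\mathbf{q})=\pi^{2}\int_{0}^{1}\frac{d\alpha}{\sqrt{\kappa^{2}+\alpha(1-\alpha)\|\mathbf{q}\|^{2}}}\;\leq\;\pi^{2}\int_{0}^{1}\frac{d\alpha}{\sqrt{\alpha(1-\alpha)}\,\|\mathbf{q}\|}=\frac{\pi^{3}}{\|\mathbf{q}\|}.
\]
Thus $\|G^{1/2}V^{2}G^{1/2}\|_{HS}^{2}\leq\pi^{3}\int|\widehat{V^{2}}(\mathbf{q})|^{2}\|\mathbf{q}\|^{-1}d\mathbf{q}$.

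Finally, I would identify the right-hand side with the Rollnik norm. Since $V$ is real-valued, $V^{2}\geq 0$ and $|V^{2}|=V^{2}$, and by Parseval/Plancherel applied to the convolution $V^{2}\ast V^{2}$ with the kernel $\|\mathbf{r}\|^{-2}$, using the 3D Fourier pair $\|\mathbf{r}\|^{-2}\leftrightarrow c_{0}\|\mathbf{p}\|^{-1}$, we get
\[
\int\frac{|\widehat{V^{2}}(\mathbf{q})|^{2}}{\|\mathbf{q}\|}\,d\mathbf{q}\;=\;c_{0}^{-1}\!\int\!\int\frac{V^{2}(\mathbf{r})\,V^{2}(\mathbf{r}')}{\|\mathbf{r}-\mathbf{r}'\|^{2}}\,d\mathbf{r}\,d\mathbf{r}'\;=\;c_{0}^{-1}\|V^{2}\|_{R}^{2},
\]
which is finite by hypothesis. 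Hence $K\in L^{2}(\mathbb{R}^{3}\times\mathbb{R}^{3})$ and the operator is Hilbert--Schmidt; boundedness then follows from the standard inclusion of Hilbert--Schmidt operators among the bounded ones.

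The only genuinely non-routine step is the evaluation/bound of $F(\kappa,\mathbf{q})$. The Feynman parameter trick is the cleanest route; as a sanity check, the bound $F\lesssim\|\mathbf{q}\|^{-1}$ is uniform in $\kappa>0$, which correctly matches the $\|\mathbf{q}\|^{-1}$ weight that defines the Rollnik norm in momentum space, so no factor of $\kappa$ appears in the final estimate.
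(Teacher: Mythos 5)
Your proof is correct. Note, though, that the paper does not actually prove this statement: it quotes it from Simon (\cite[Theorem I.22]{SIMON:1971}) and merely records the equivalence of the momentum-space integral (\ref{eq:Hilbert-Schmidt norm}) with the position-space Rollnik condition. What you supply is a self-contained verification, and the route you take --- change of variables to $\mathbf{q}=\mathbf{p}-\mathbf{p}'$, evaluation of the resolvent-weight convolution $F(\kappa,\mathbf{q})$ by a Feynman parameter, the bound $F\le\pi^{3}/\|\mathbf{q}\|$ uniform in $\kappa>0$, and then Plancherel with the pair $\|\mathbf{r}\|^{-2}\leftrightarrow 2\pi^{2}\|\mathbf{p}\|^{-1}$ to identify $\int|\widehat{V^{2}}(\mathbf{q})|^{2}\|\mathbf{q}\|^{-1}d\mathbf{q}$ with the Rollnik norm of $V^{2}$ --- is sound; each computation checks out ($\int_{\mathbb{R}^{3}}(a^{2}+\|\mathbf{p}\|^{2})^{-2}d\mathbf{p}=\pi^{2}/a$ and $\int_{0}^{1}(\alpha(1-\alpha))^{-1/2}d\alpha=\pi$), and the positivity of $V^{2}$ legitimately removes the absolute values in the Rollnik integral. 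Two minor remarks. First, your constants implicitly fix a Fourier normalization; since only finiteness matters this is harmless, but an exact constant would require tracking the $(2\pi)^{3}$ factors in Plancherel. Second, there is an even shorter position-space alternative in the spirit of Simon's original argument: the operator $|V|G_{\kappa}|V|$ has kernel $|V(\mathbf{r})|\,e^{-\kappa\|\mathbf{r}-\mathbf{r}'\|}(4\pi\|\mathbf{r}-\mathbf{r}'\|)^{-1}|V(\mathbf{r}')|$, whose $L^{2}$ norm is bounded by $(4\pi)^{-1}\|V^{2}\|_{R}$ at sight, and it shares its Hilbert--Schmidt norm with $G_{\kappa}^{1/2}V^{2}G_{\kappa}^{1/2}$ because the two are $AA^{*}$ and $A^{*}A$ for $A=|V|G_{\kappa}^{1/2}$. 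Your momentum-space computation buys something that version does not make explicit, namely that the bound is uniform in $\kappa$ and matches the momentum-space form of the Rollnik norm that the paper actually displays in (\ref{eq:Hilbert-Schmidt norm}).
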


In other words, if the potential $V^{2}$ is in the Rollnik class,
then the integral 
\begin{equation}
\left\Vert K\right\Vert _{HS}=\left(\int_{\mathbb{R}^{3}}\int_{\mathbb{R}^{3}}\frac{\left|\widehat{V^{2}}\left(\mathbf{p}-\mathbf{p}'\right)\right|^{2}}{\left(\kappa^{2}+\left\Vert \mathbf{p}\right\Vert ^{2}\right)\left(\kappa^{2}+\left\Vert \mathbf{p}'\right\Vert ^{2}\right)}d\mathbf{p}'d\mathbf{p}\right)^{1/2}\label{eq:Hilbert-Schmidt norm}
\end{equation}
is finite. It is well known that the Hilbert-Schmidt operators are
compact.

The Rollnik class of potentials is defined by the condition (see \cite{SIMON:1971})
\[
\int_{\mathbb{R}^{3}\times\mathbb{R}^{3}}\frac{\left|V\left(\mathbf{r}\right)\right|\left|V\left(\mathbf{r}'\right)\right|}{\left\Vert \mathbf{r}-\mathbf{r}'\right\Vert ^{2}}d\mathbf{r}d\mathbf{r}'<\infty,
\]
which can be shown to be equivalent to (\ref{eq:Hilbert-Schmidt norm})
(see \cite[Theorem I.22]{SIMON:1971}). Note that in our case it is
a requirement for $V^{2}$ rather than $V$. The Rollnik class of
potentials has been identified independently by a number of authors
\cite{ROLLNI:1956,SCHWIN:1961,GRO-WU:1961,GRO-WU:1962,SC-WE-WR:1964}
as a class of physically significant potentials that have workable
mathematical properties. Quoting from \cite[page 126]{SCHWIN:1961}:
``The latter quantity exists for potentials that decrease more rapidly
than $\left\Vert \mathbf{r}\right\Vert ^{-2}$ as $\left\Vert \mathbf{r}\right\Vert \to\infty$
and that in the neighborhoods of a finite number of points $\mathbf{r}_{0}$
are less singular than $\left\Vert \mathbf{r}-\mathbf{r}_{0}\right\Vert ^{-2}$''.
Note that since we consider the square of the potential in Theorem~\ref{thm:()-If-potential V^2},
for the Coulomb potential we are just at the threshold of missing
the Rollnik class. 

In practical computation we replace the Coulomb potential by a linear
combination of Gaussians. Following \cite[Theorem 3, 5 and Lemma 4]{BEY-MON:2010}
and setting 
\begin{equation}
S_{\infty}\left(\left\Vert \mathbf{r}\right\Vert \right)=\frac{h}{\Gamma(\alpha/2)}\sum_{n\in\mathbb{Z}}e^{\alpha nh/2}e^{-e^{nh}\left\Vert \mathbf{r}\right\Vert ^{2}},\label{S-infinity}
\end{equation}
we have 
\begin{thm}
Given $\alpha>0$ and $0<\epsilon\leq1$, for any step size $h$ such
that 
\begin{equation}
h\leq\frac{2\pi}{\log3+\alpha\log(\cos1)^{-1}/2+\log\epsilon^{-1}},\label{hEstimate}
\end{equation}
we have\textup{ }for $\left\Vert \mathbf{r}\right\Vert >0$
\begin{equation}
\left|\left\Vert \mathbf{r}\right\Vert ^{-\alpha}-S_{\infty}\left(\left\Vert \mathbf{r}\right\Vert \right)\right|\leq\left\Vert \mathbf{r}\right\Vert ^{-\alpha}\epsilon,\label{seriesApproximation}
\end{equation}
and
\[
S_{\infty}\left(\left\Vert \mathbf{r}\right\Vert \right)<(\epsilon+1)\left\Vert \mathbf{r}\right\Vert ^{-\alpha},
\]
where $S_{\infty}$ is given in (\ref{S-infinity}).
\end{thm}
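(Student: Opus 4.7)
The plan is to interpret $S_{\infty}(\|\mathbf{r}\|)$ as an infinite trapezoidal-rule quadrature of an integral representation of $\|\mathbf{r}\|^{-\alpha}$, and then to bound the quadrature error via Poisson summation / contour shifting. Starting from the Gamma function identity
\[
\|\mathbf{r}\|^{-\alpha}=\frac{1}{\Gamma(\alpha/2)}\int_{0}^{\infty}t^{\alpha/2-1}e^{-t\|\mathbf{r}\|^{2}}\,dt,
\]
the substitution $t=e^{s}$ (so $dt=e^{s}ds$) yields
\[
\|\mathbf{r}\|^{-\alpha}=\frac{1}{\Gamma(\alpha/2)}\int_{-\infty}^{\infty}e^{\alpha s/2}\,e^{-e^{s}\|\mathbf{r}\|^{2}}\,ds,
\]
and the terms of $S_{\infty}(\|\mathbf{r}\|)$ are exactly the trapezoidal samples of this integrand at $s=nh$. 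The first inequality is therefore a bound on the trapezoidal error for this particular analytic integrand.

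Next, I would apply the standard Poisson-summation argument for the trapezoidal rule: writing $f(s)=\Gamma(\alpha/2)^{-1}e^{\alpha s/2}e^{-e^{s}\|\mathbf{r}\|^{2}}$, the error equals $-\sum_{k\neq 0}\widehat{f}(2\pi k/h)$, so one needs $|\widehat{f}(\xi)|$ to decay exponentially in $|\xi|$. Since $f$ extends holomorphically to the strip $|\mathrm{Im}\,s|<\pi/2$ (the $\Gamma(\alpha/2)$ factor makes this entire), shifting the contour to $\mathrm{Im}\,s=\pm v$ for $v\in(0,1]$ and using $\mathrm{Re}(e^{s})=e^{u}\cos v$ gives
\[
|\widehat{f}(\xi)|\le\|\mathbf{r}\|^{-\alpha}(\cos v)^{-\alpha/2}e^{-v|\xi|}.
\]
Choosing $v=1$ (which produces the $\cos 1$ factor in the stated denominator) and summing the geometric series $\sum_{k\ge 1}e^{-2\pi k/h}\le 1/(1-e^{-2\pi/h})$ (this is where the constant $3$ ultimately appears, bounding $2/(1-e^{-2\pi/h})$ by $3$ for $h$ in the regime of interest) yields an inequality of the shape
\[
\bigl|\|\mathbf{r}\|^{-\alpha}-S_{\infty}(\|\mathbf{r}\|)\bigr|\le \|\mathbf{r}\|^{-\alpha}\cdot 3\,(\cos 1)^{-\alpha/2}\,e^{-2\pi/h}.
\]
Taking logarithms and requiring the right-hand side to be $\le \|\mathbf{r}\|^{-\alpha}\epsilon$ translates directly into the condition (\ref{hEstimate}) on $h$.

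The second inequality is then immediate: by the triangle inequality,
\[
S_{\infty}(\|\mathbf{r}\|)\le \|\mathbf{r}\|^{-\alpha}+\bigl|S_{\infty}(\|\mathbf{r}\|)-\|\mathbf{r}\|^{-\alpha}\bigr|\le(1+\epsilon)\|\mathbf{r}\|^{-\alpha}.
\]
The main obstacle in carrying this out carefully is the bookkeeping of constants to arrive at precisely the expression $\log 3+\alpha\log(\cos 1)^{-1}/2+\log\epsilon^{-1}$. In particular one must commit to the contour shift $v=1$ (rather than optimizing $v$ as a function of $h$ and $\|\mathbf{r}\|$), obtain a $\|\mathbf{r}\|$-independent bound on the error by normalizing by $\|\mathbf{r}\|^{-\alpha}$ in the contour estimate, and bound the series sum by a clean numerical constant; otherwise the approach is the routine analytic-trapezoidal-rule argument, and the full details may be imported from \cite[Theorem 3, 5 and Lemma 4]{BEY-MON:2010}.
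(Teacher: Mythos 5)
Your proposal is correct and follows essentially the same route as the source the paper cites for this theorem (the paper itself gives no proof, deferring to \cite{BEY-MON:2010}): the Gamma-function integral representation, the substitution $t=e^{s}$, Poisson summation for the trapezoidal error, a contour shift to $\mathrm{Im}\,s=\pm1$ producing the $(\cos 1)^{-\alpha/2}$ factor, and the bound $2/(1-e^{-2\pi/h})\le 3$ (valid since the hypothesis forces $h\le 2\pi/\log 3$) yielding exactly the stated condition on $h$. The second inequality via the triangle inequality is likewise the intended one-line consequence.
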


For a given accuracy $\epsilon$ and power $\alpha$ (n.b., for the
Coulomb potential $\alpha=1$ and for the square $\alpha=2$), we
may first select $h$ and then, for a given range of values $\left\Vert \mathbf{r}\right\Vert $,
truncate $S_{\infty}\left(\left\Vert \mathbf{r}\right\Vert \right)$
to yield a finite sum approximation in that range to obtain a finite
sum $S_{F}\left(\left\Vert \mathbf{r}\right\Vert \right)$, 
\begin{equation}
S_{F}\left(\left\Vert \mathbf{r}\right\Vert \right)=S_{F}\left(\left\Vert \mathbf{r}\right\Vert ;M_{0},M_{1},h\right)=\sum_{n=M_{0}+1}^{M_{1}}e^{\alpha nh/2}e^{-e^{nh}\left\Vert \mathbf{r}\right\Vert ^{2}}.\label{finiteApproxDef}
\end{equation}
It is shown in \cite[Theorem 5]{BEY-MON:2010} that, for a fixed $\alpha$,
$d<\left\Vert \mathbf{r}\right\Vert <1/d$ and any finite $\epsilon>0$,
the step $h=\mathcal{O}\left(1/\log\epsilon^{-1}\right)$ and the
number of terms in (\ref{finiteApproxDef}) are estimated as $M_{1}-M_{0}=\mathcal{O}\left(\log d^{-1}\right)$
and $M_{1}-M_{0}=\mathcal{O}\left(\left(\log\epsilon^{-1}\right)^{2}\right)$.
By choosing $d$ to be small (e.g. $d=10^{-15}$) and selecting $\epsilon$
as needed, we replace the Coulomb potentials by their approximation
for any user-selected accuracy and range. Importantly, the resulting
approximating potential $S_{F}\left(\left\Vert \mathbf{r}\right\Vert \right)$
is in the Rollnik class so that the kernel (\ref{eq:kernel (with delta=00003D0)})
is that of a compact operator.

\subsection{A compact operator for the Coulomb potential}

As has already been mentioned, neither the Coulomb potential nor its
square are in the Rollnik class. However, we show that for the Coulomb
potential the operator with the kernel (\ref{eq:kernel delta}) is
compact for functions in the Hilbert space $\mathscr{H}_{\kappa,\delta}$
for any $\delta>0$. Using (\ref{eq:kernel delta}) with the Coulomb
potential, 
\[
\widehat{V^{2}}\left(\mathbf{p}-\mathbf{p}'\right)=\frac{1}{\left\Vert \mathbf{p}-\mathbf{p}'\right\Vert },
\]
we consider the kernel

\begin{equation}
T\left(\mathbf{p},\mathbf{p}'\right)=\frac{1}{\left(\kappa^{2}+\left\Vert \mathbf{p}\right\Vert ^{2}\right)^{1/2+\delta/2}}\frac{1}{\left\Vert \mathbf{p}-\mathbf{p}'\right\Vert }\frac{1}{\left(\kappa^{2}+\left\Vert \mathbf{p}'\right\Vert ^{2}\right)^{1/2+\delta/2}}.\label{eq:operator Tdelta}
\end{equation}

\begin{lem}
\label{lem:The-operator-with}The operator with the kernel (\ref{eq:operator Tdelta})
is a Hilbert-Schmidt operator with the Hilbert-Schmidt norm 
\[
\left\Vert T\right\Vert _{HS}=\frac{1}{\kappa^{2\delta}}\frac{\Gamma\left(1/2+\delta\right)}{\Gamma\left(1+\delta\right)}\frac{\pi^{3/3}}{\sqrt{\delta}}.
\]
\end{lem}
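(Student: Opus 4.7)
The plan is to show that the claimed Hilbert--Schmidt norm $\left\Vert T\right\Vert_{HS}^{2}=\int\!\!\int|T(\mathbf{p},\mathbf{p}')|^{2}d\mathbf{p}\,d\mathbf{p}'$ is finite and evaluates to the stated expression; finiteness alone gives compactness. I will linearize all three singular factors by Laplace/Gaussian representations, perform a Gaussian integral in the momentum variables, and reduce the problem to a one--dimensional Beta integral.

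\textbf{Step 1 (integral representations).} I use
$$
\frac{1}{(\kappa^{2}+\|\mathbf{p}\|^{2})^{1+\delta}}=\frac{1}{\Gamma(1+\delta)}\int_{0}^{\infty}t^{\delta}e^{-t(\kappa^{2}+\|\mathbf{p}\|^{2})}\,dt,
\qquad
\frac{1}{\|\mathbf{p}-\mathbf{p}'\|^{2}}=\int_{0}^{\infty}e^{-u\|\mathbf{p}-\mathbf{p}'\|^{2}}\,du,
$$
applying the first to both $(\kappa^{2}+\|\mathbf{p}\|^{2})^{1+\delta}$ and $(\kappa^{2}+\|\mathbf{p}'\|^{2})^{1+\delta}$ (with parameters $t$ and $s$). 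After swapping integrals (justified by positivity and Tonelli), the double integral becomes
$$
\left\Vert T\right\Vert_{HS}^{2}=\frac{1}{\Gamma(1+\delta)^{2}}\int_{0}^{\infty}\!\!\int_{0}^{\infty}\!\!\int_{0}^{\infty}t^{\delta}s^{\delta}e^{-(t+s)\kappa^{2}}\,J(t,s,u)\,dt\,ds\,du,
$$
where $J(t,s,u)=\int\!\!\int e^{-Q(\mathbf{p},\mathbf{p}')}d\mathbf{p}\,d\mathbf{p}'$ with quadratic form $Q=t\|\mathbf{p}\|^{2}+u\|\mathbf{p}-\mathbf{p}'\|^{2}+s\|\mathbf{p}'\|^{2}$.

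\textbf{Step 2 (Gaussian integration in momentum).} The form $Q$ factors coordinate-wise through the $2\times2$ block
$\bigl(\begin{smallmatrix}t+u & -u\\ -u & s+u\end{smallmatrix}\bigr)$, whose determinant is $st+tu+su$. Hence $J(t,s,u)=\pi^{3}(st+tu+su)^{-3/2}$. Performing the $u$--integral by the substitution $v=st+u(s+t)$ yields
$$
\int_{0}^{\infty}\frac{du}{(st+u(s+t))^{3/2}}=\frac{2}{(s+t)\sqrt{st}},
$$
so
$$
\left\Vert T\right\Vert_{HS}^{2}=\frac{2\pi^{3}}{\Gamma(1+\delta)^{2}}\int_{0}^{\infty}\!\!\int_{0}^{\infty}\frac{(ts)^{\delta-1/2}e^{-(t+s)\kappa^{2}}}{s+t}\,dt\,ds.
$$

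\textbf{Step 3 (Beta--Gamma reduction).} Substitute $t=rx$, $s=r(1-x)$ with Jacobian $r$, so $(r,x)\in(0,\infty)\times(0,1)$. The integrand separates:
$$
\int_{0}^{\infty}r^{2\delta-1}e^{-r\kappa^{2}}\,dr\;\cdot\;\int_{0}^{1}[x(1-x)]^{\delta-1/2}\,dx=\frac{\Gamma(2\delta)}{\kappa^{4\delta}}\cdot\frac{\Gamma(\delta+1/2)^{2}}{\Gamma(2\delta+1)}=\frac{\Gamma(\delta+1/2)^{2}}{2\delta\,\kappa^{4\delta}},
$$
using $\Gamma(2\delta+1)=2\delta\,\Gamma(2\delta)$. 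Assembling the constants gives
$$
\left\Vert T\right\Vert_{HS}^{2}=\frac{\pi^{3}\,\Gamma(\delta+1/2)^{2}}{\delta\,\kappa^{4\delta}\,\Gamma(1+\delta)^{2}},
$$
from which the stated formula for $\|T\|_{HS}$ follows by taking the square root.

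\textbf{Remarks on obstacles.} The arithmetic is straightforward but order-dependent: every other grouping of the three Laplace parameters leads to harder residual integrals. The one genuinely tricky step is evaluating the Gaussian $J(t,s,u)$ correctly, since the off-diagonal coupling $-u$ controls the determinant $st+tu+su$, and this is exactly what makes the subsequent $u$--integral elementary and produces the factor $(s+t)^{-1}(st)^{-1/2}$ needed to split the final $(t,s)$ integral into a Gamma and a Beta piece. Finiteness of the final expression for every $\delta>0$ (with the characteristic $1/\sqrt{\delta}$ blow-up as $\delta\downarrow0$) confirms that the operator is Hilbert--Schmidt, hence compact, on $\mathscr{H}_{\kappa,\delta}$.
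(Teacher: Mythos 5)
Your proof is correct and follows essentially the same route as the paper's: the same Laplace/Gaussian representations of the three factors, the same Gaussian integral yielding $\pi^{3}(st+tu+su)^{-3/2}$, and the same elementary $u$--integral, with your Beta--Gamma substitution merely making explicit the final step that the paper states without detail. Your value $\left\Vert T\right\Vert_{HS}^{2}=\pi^{3}\Gamma(1/2+\delta)^{2}/\bigl(\delta\,\kappa^{4\delta}\,\Gamma(1+\delta)^{2}\bigr)$ agrees with the paper's and confirms that the factor $\pi^{3/3}$ in the statement of the lemma should read $\pi^{3/2}$.
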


\begin{proof}
We prove Lemma~\ref{lem:The-operator-with} by explicitly evaluating
the Hilbert-Schmidt norm by computing the integral 
\begin{equation}
\left\Vert T\right\Vert _{HS}^{2}=\int_{\mathbb{R}^{3}}\int_{\mathbb{R}^{3}}\frac{1}{\left(\kappa^{2}+\left\Vert \mathbf{p}\right\Vert ^{2}\right)^{1+\delta}}\frac{1}{\left\Vert \mathbf{p}-\mathbf{p}'\right\Vert ^{2}}\frac{1}{\left(\kappa^{2}+\left\Vert \mathbf{p}'\right\Vert ^{2}\right)^{1+\delta}}d\mathbf{p}'d\mathbf{p}.\label{eq:integral for HS norm}
\end{equation}
We start with two integrals,
\begin{equation}
\frac{1}{\left\Vert \mathbf{p}\right\Vert ^{2}}=\int_{0}^{\infty}e^{-\tau\left\Vert \mathbf{p}\right\Vert ^{2}}d\tau,\label{eq:one over psquare}
\end{equation}
and
\begin{equation}
\frac{1}{\left(\kappa^{2}+\left\Vert \mathbf{p}\right\Vert ^{2}\right)^{1+\delta}}=\frac{1}{\Gamma\left(1+\delta\right)}\int_{0}^{\infty}e^{-s\left(\kappa^{2}+\left\Vert \mathbf{p}\right\Vert ^{2}\right)}s^{\delta}ds.\label{eq:one over with kappa}
\end{equation}
Substituting (\ref{eq:one over psquare}) and (\ref{eq:one over with kappa})
into (\ref{eq:integral for HS norm}), we have 
\begin{eqnarray*}
\left\Vert T\right\Vert _{HS}^{2} & = & \frac{1}{\Gamma\left(1+\delta\right)^{2}}\int_{0}^{\infty}\int_{0}^{\infty}\int_{0}^{\infty}dsdtd\tau\,\left(st\right)^{\delta}e^{-\kappa^{2}\left(s+t\right)}\\
 &  & \int_{\mathbb{R}^{3}}\int_{\mathbb{R}^{3}}e^{-s\left\Vert \mathbf{p}\right\Vert ^{2}}e^{-s\left\Vert \mathbf{p}'\right\Vert ^{2}}e^{-\tau\left\Vert \mathbf{p}-\mathbf{p}'\right\Vert ^{2}}d\mathbf{p}'d\mathbf{p}
\end{eqnarray*}
Evaluating the integral with the Gaussians, we obtain 
\[
\int_{\mathbb{R}^{3}}\int_{\mathbb{R}^{3}}e^{-s\left\Vert \mathbf{p}\right\Vert ^{2}}e^{-s\left\Vert \mathbf{p}'\right\Vert ^{2}}e^{-\tau\left\Vert \mathbf{p}-\mathbf{p}'\right\Vert ^{2}}d\mathbf{p}'d\mathbf{p}=\frac{\pi^{3}}{\left(\tau\left(s+t\right)+ts\right)^{3/2}}.
\]
Next we compute the integral over $\tau$ and obtain

\[
\int_{0}^{\infty}\frac{\pi^{3}}{\left(\tau\left(s+t\right)+ts\right)^{3/2}}d\tau=\frac{2\pi^{3}}{\left(s+t\right)\left(st\right)^{1/2}}.
\]
Finally, we compute 
\[
\left\Vert T\right\Vert _{HS}^{2}=\frac{2\pi^{3}}{\Gamma\left(1+\delta\right)^{2}}\int_{0}^{\infty}\int_{0}^{\infty}e^{-\kappa^{2}\left(s+t\right)}\frac{1}{\left(s+t\right)\left(st\right)^{1/2-\delta}}dsdt=\frac{\pi^{3}\Gamma\left(1/2+\delta\right)^{2}}{\delta\,\Gamma\left(1+\delta\right)^{2}}\frac{1}{\kappa^{4\delta}}.
\]
\end{proof}
\begin{rem}
A product of two Coulomb potentials with distinct nuclear centers
has the same asymptotic decay in space as the square of the Coulomb
potential considered above. Therefore, the product in the momentum
space will have the same dominant singularity at the origin as the
square of the Coulomb potential; other components of the potential
are sufficiently smooth in space so that their decay is faster in
the momentum space and will not cause the Hilbert-Schmidt norm to
become unbounded.
\end{rem}

\section{Bounded operators}

Next we consider two operators of interest and show that they are
bounded.

\subsection{Operator \textmd{$G_{\mu}^{1/2}$}}

In the momentum space the operator $G_{\mu}^{1/2}$ (\ref{eq:Green's function, non-rel case})
is a multiplication operator by 
\[
\frac{1}{\left(\mu^{2}+\left\Vert \mathbf{p}\right\Vert ^{2}\right)^{1/2}}
\]
and, therefore, is a bounded operator with the norm $1/\mu$.

\subsection{Matrix operator $\frac{1}{\hbar^{2}c^{2}}\left(\mathcal{H}_{0}+E\mathcal{I}\right)\mathcal{G}^{1/2}$}

In momentum space we have
\[
\widehat{\frac{1}{\hbar^{2}c^{2}}\left(\mathcal{H}_{0}+E\mathcal{I}\right)}=\left(\begin{array}{cccc}
\frac{m}{\hbar^{2}}+\frac{E}{\hbar^{2}c^{2}} & 0 & \frac{p_{3}}{\hbar c} & \frac{p_{1}-ip_{2}}{\hbar c}\\
0 & \frac{m}{\hbar^{2}}+\frac{E}{\hbar^{2}c^{2}} & \frac{p_{1}+ip_{2}}{\hbar c} & \frac{-p_{3}}{\hbar c}\\
\frac{p_{3}}{\hbar c} & \frac{p_{1}-ip_{2}}{\hbar c} & -\frac{m}{\hbar^{2}}+\frac{E}{\hbar^{2}c^{2}} & 0\\
\frac{p_{1}+ip_{2}}{\hbar c} & \frac{-p_{3}}{\hbar c} & 0 & -\frac{m}{\hbar^{2}}+\frac{E}{\hbar^{2}c^{2}}
\end{array}\right)
\]
and
\[
\mathcal{G}^{1/2}\left(\kappa,\mathbf{p}\right)=\mathcal{I}\frac{1}{\left(\kappa^{2}+\left\Vert \mathbf{p}\right\Vert ^{2}\right)^{1/2}}.
\]
We need to show that 
\[
\left\Vert \frac{1}{\hbar^{2}c^{2}}\left(\mathcal{H}_{0}+E\mathcal{I}\right)\mathcal{G}^{1/2}\varphi\right\Vert _{2}^{2}\le Const\left\Vert \varphi\right\Vert _{2}^{2},
\]
where the norm is the sum of squares of absolute values of the components
and the integration in $\mathbb{R}^{3}$ is over the variable $\mathbf{p}$.
For the proof we split the diagonal and off-diagonal parts of the
matrix operator,
\[
\widehat{\frac{1}{\hbar^{2}c^{2}}\left(\mathcal{H}_{0}+E\mathcal{I}\right)}\mathcal{G}^{1/2}=\mathcal{D}+\mathcal{O},
\]
where
\[
\mathcal{D}=\left(\kappa^{2}+\left\Vert \mathbf{p}\right\Vert ^{2}\right)^{-1/2}\left(\begin{array}{cccc}
\frac{m}{\hbar^{2}}+\frac{E}{\hbar^{2}c^{2}} & 0 & 0 & 0\\
0 & \frac{m}{\hbar^{2}}+\frac{E}{\hbar^{2}c^{2}} & 0 & 0\\
0 & 0 & -\frac{m}{\hbar^{2}}+\frac{E}{\hbar^{2}c^{2}} & 0\\
0 & 0 & 0 & -\frac{m}{\hbar^{2}}+\frac{E}{\hbar^{2}c^{2}}
\end{array}\right)
\]
and
\[
\mathcal{O}=\left(\kappa^{2}+\left\Vert \mathbf{p}\right\Vert ^{2}\right)^{-1/2}\left(\begin{array}{cccc}
0 & 0 & \frac{p_{3}}{\hbar c} & \frac{p_{1}-ip_{2}}{\hbar c}\\
0 & 0 & \frac{p_{1}+ip_{2}}{\hbar c} & \frac{-p_{3}}{\hbar c}\\
\frac{p_{3}}{\hbar c} & \frac{p_{1}-ip_{2}}{\hbar c} & 0 & 0\\
\frac{p_{1}+ip_{2}}{\hbar c} & \frac{-p_{3}}{\hbar c} & 0 & 0
\end{array}\right).
\]
Let $u\left(\mathbf{p}\right)=\left(u_{1}\left(\mathbf{p}\right),u_{2}\left(\mathbf{p}\right),u_{3}\left(\mathbf{p}\right),u_{4}\left(\mathbf{p}\right)\right)^{T}$
and compute 
\[
\int_{\mathbb{R}^{3}}\left\langle \mathcal{O}u,\mathcal{O}u\right\rangle d\mathbf{p}=\frac{1}{\hbar^{2}c^{2}}\int_{\mathbb{R}^{3}}\frac{\left\Vert \mathbf{p}\right\Vert ^{2}}{\kappa^{2}+\left\Vert \mathbf{p}\right\Vert ^{2}}\left\Vert u\left(\mathbf{p}\right)\right\Vert ^{2}d\mathbf{p}\le\frac{1}{\hbar^{2}c^{2}}\int_{\mathbb{R}^{3}}\left\Vert u\left(\mathbf{p}\right)\right\Vert ^{2}d\mathbf{p}
\]
so that we have 
\[
\left\Vert \mathcal{O}\right\Vert _{2}\le\frac{1}{\hbar c}.
\]
For the diagonal part, we have
\begin{eqnarray*}
\int_{\mathbb{R}^{3}}\left\langle \mathcal{D}u,\mathcal{D}u\right\rangle d\mathbf{p} & = & \frac{E}{\hbar^{2}c^{2}}\int_{\mathbb{R}^{3}}\frac{1}{\kappa^{2}+\left\Vert \mathbf{p}\right\Vert ^{2}}\left\Vert u\left(\mathbf{p}\right)\right\Vert ^{2}d\mathbf{p}\\
 & + & \frac{m}{c^{2}}\int_{\mathbb{R}^{3}}\frac{1}{\kappa^{2}+\left\Vert \mathbf{p}\right\Vert ^{2}}\left(\left|u_{1}\left(\mathbf{p}\right)\right|^{2}+\left|u_{2}\left(\mathbf{p}\right)\right|^{2}-\left|u_{3}\left(\mathbf{p}\right)\right|^{2}-\left|u_{4}\left(\mathbf{p}\right)\right|^{2}\right)d\mathbf{p}
\end{eqnarray*}
or
\[
\int_{\mathbb{R}^{3}}\left\langle \mathcal{D}u,\mathcal{D}u\right\rangle d\mathbf{p}\le\left(\frac{E}{\hbar^{2}c^{2}\kappa^{2}}+\frac{m}{c^{2}\kappa^{2}}\right)\int_{\mathbb{R}^{3}}\left\Vert u\left(\mathbf{p}\right)\right\Vert ^{2}d\mathbf{p}
\]
so that 
\[
\left\Vert \mathcal{D}\right\Vert _{2}\le\left(\frac{E}{\hbar^{2}c^{2}\kappa^{2}}+\frac{m}{c^{2}\kappa^{2}}\right)^{1/2}
\]
and the operator in question is bounded.

\section{Spectral structure of auxiliary family of integral operators}

Since the product of a bounded and a compact operator is compact,
we have shown that in the non-relativistic case the operator in (\ref{eq:auxiliary eigenvalue problem Kohn-Sham-1}),
\[
G_{\mu}^{1/2}VG_{\mu}^{1/2},
\]
is compact for any $\mu>0$ and the operator $\mathcal{A}\left(\kappa\right)$
in (\ref{eq:the main operator}) is compact for any $\kappa>0$.

The operator $G_{\mu}^{1/2}VG_{\mu}^{1/2}$ is compact and self-adjoint,
so we can apply the spectral theorem for compact self-adjoint operators
in a Hilbert space. Therefore, we know that it has only discrete eigenvalues
with the only possible accumulation point at zero, and its norm is
equal to the largest absolute value of an eigenvalue. Considering

\begin{equation}
\lambda_{\mu}\phi_{\mu}=-2G_{\mu}^{1/2}VG_{\mu}^{1/2}\phi_{\mu},\label{eq:aux eig prob}
\end{equation}
we have 
\begin{equation}
\lambda_{\mu}\left\langle \phi_{\mu},\phi_{\mu}\right\rangle =-2\left\langle G_{\mu}^{1/2}VG_{\mu}^{1/2}\phi_{\mu},\phi_{\mu}\right\rangle =-2\left\langle VG_{\mu}^{1/2}\phi_{\mu},G_{\mu}^{1/2}\phi_{\mu}\right\rangle =-2\left\langle V\psi_{\mu},\psi_{\mu}\right\rangle .\label{eq:lambda is positive}
\end{equation}

\textit{Since the potential (\ref{eq: components of the potential})
is a negative definite multiplication operator (see Remark~\ref{rem: neg.definite, non-relativistic}),
it implies that $\lambda_{\mu}>0$ for any $\mu>0$.} Note that from
(\ref{eq:Kohn-Sham}) we have 
\[
\left\langle V\left(\mathbf{r}\right)\psi_{i},\psi_{i},\right\rangle =E_{i}\left\langle \psi_{i},\psi_{i}\right\rangle -\frac{1}{2}\left\langle \nabla\psi_{i},\nabla\psi_{i}\right\rangle ,\,\,\,i=1,\dots.N,
\]
and, if energies $E_{i}$ are negative, then on the solutions $\psi_{i}$
\[
\left\langle V\left(\mathbf{r}\right)\psi_{i},\psi_{i}\right\rangle <0.
\]
Each time we have $\mu_{i}^{2}=-E_{i}$, it implies that $\lambda_{\mu}=1$
and $\psi_{\mu_{i}}$ is a bound state satisfying (\ref{eq:Kohn-Sham}).
Also if $\lambda_{\mu}=1$ then $\mu_{i}^{2}=-E_{i}$ so that we can
seek unit eigenvalues of the auxiliary eigenvalue problem as a way
of finding the bound states.

In the relativistic case, we consider the operator $\mathcal{A}\left(\kappa\right)$,
\begin{equation}
\mathcal{A}\left(\kappa\right)=-\frac{1}{\hbar^{2}c^{2}}\left(\mathcal{H}_{0}+E\left(\kappa\right)\mathcal{I}\right)\mathcal{G}^{1/2}\left(\kappa\right)\mathcal{V}\mathcal{G}^{1/2}\left(\kappa\right),\label{eq:the main operator-1}
\end{equation}
which is a product of two self-adjoint operators. \textit{Since the
potential (\ref{eq:relativistic potential}) is a negative definite
matrix multiplication operator (see Remark~\ref{rem:negative definite, relativistic}),}
the spectrum of $\mathcal{A}\left(\kappa\right)$ is real. Indeed,
let us define the operator 
\[
\mathcal{B}\left(\kappa\right)=\mathcal{G}^{1/2}\left(\kappa\right)\left(\mathcal{-V}\right)\mathcal{G}^{1/2}\left(\kappa\right),
\]
which is a compact positive-definite operator. Since the operator
$\mathcal{B}\left(\kappa\right)$ is positive definite and self-adjoint,
according to corollary of Lemma~\ref{thm:A*A}, $\mathcal{B}^{1/2}\left(\kappa\right)$
is well defined and is also a compact operator. Since $B$ is positive
definite, $\mathcal{B}^{-1/2}$ is well defined and is at least bounded.
Since $\mathcal{A}\left(\kappa\right)$ is a compact operator, we
can consider
\[
\mathcal{B}^{1/2}\left(\kappa\right)\mathcal{A}\left(\kappa\right)\mathcal{B}^{-1/2}=\mathcal{B}^{1/2}\left(\kappa\right)\left[\frac{1}{\hbar^{2}c^{2}}\left(\mathcal{H}_{0}+E\left(\kappa\right)\mathcal{I}\right)\right]\mathcal{B}^{1/2}\left(\kappa\right)
\]
which is similar to $\mathcal{A}\left(\kappa\right)$ and is compact
and self-adjoint. Therefore, we conclude that the spectrum of $\mathcal{A}\left(\kappa\right)$
is real and has only discrete eigenvalues with the only possible accumulation
point at zero. Note that our considerations above are valid for any
$\mu>0$ and $\kappa>0$. A slightly more general discussion of the
spectra of the product of two self-adjoint operators can be found
in Appendix~\ref{sec:Eigenvalues-of-the product}.

\section{Convergence of iteration with an arbitrary initialization}

We now turn to iterative solution of equations \eqref{eq:auxiliary eq for phi}
and \eqref{eq:auxiliary eigenvalue problem Kohn-Sham-1}. We note
that iterations used in practical computations when properly initialized
do converge (see \cite{H-F-Y-G-B:2004,A-S-H-B:2019}); however, we
do not have a way to show that it will always be the case. Instead,
we consider solving the auxiliary eigenvalue problems for a fixed
parameter $\mu$ in \eqref{eq:auxiliary eigenvalue problem Kohn-Sham}
or $\kappa$ in \eqref{eq:auxiliary equation for psi}. Since the
operators are compact, the power iteration (combined with orthogonality
between selected eigenfunctions), will always converge. In the non-linear
Kohn-Sham and Hartree-Fock models, since components of the potential
depend on the eigenfunctions, we assume that convergence is not affected
by this nonlinear dependence and that solutions exists for all values
of the parameter. Note that while this is an assumption, any theory
that replaces the electron-electron interaction by an averaged field
becomes unusable if the solution of the resulting nonlinear eigenvalue
problem does not exist or cannot be obtained via an iteration. Once
the solution is obtained for a particular value of the parameter,
we can fix the potential and compute the derivative of the eigenvalue
with respect to the parameter. This derivative can then be used to
tune the parameter in order to arrive at $\lambda_{\mu}=1$. We provide
explicit expressions for the derivatives of the auxiliary eigenvalue
with respect to the parameter in Appendices~\ref{subsec:Appendix-C:-Monotone non-relativistic}
and \ref{subsec:Appendix-D:-Dependence of eig relativistic}. These
derivatives are derived assuming the potential is fixed.

As mentioned above, in practical computations we do not wait for the
full convergence of the power iteration; splitting this iteration
into two provides a way for understanding the reasons for convergence
and a way to achieve convergence if, for some reason, the practical
approach fails.

\subsection{Numerical demonstration}

The algorithm for computing bound states for relativistic equations
of QC is described in \cite{A-S-H-B:2019}. In order to demonstrate
the robust convergence regardless of starting condition, we consider
an example of using relativistic equations for the hydrogen atom,
which corresponds to a single atom with unit nuclear charge (i.e.,
$Z_{\alpha}=1$ in equation \ref{eq:nuclearpotential}). In all examples,
the non-relativistic energy parameter for the Green's function operator
is fixed at $-0.5$ (this is the non-relativistic ground-state energy
in the atomic units employed herein), with the corresponding relativistic
energy obtained by adding $mc^{2}$. For ease of comparison, the relativistic
energies reported in the text below are shifted by subtracting $mc^{2}$.
Also, the number of iterations is fixed at $100$ (we show only some
of them). In QC, the Dirac 4-spinor is interpreted as comprising two
2-spinors --- the ``large'' and ``small'' components, with the
terminology arising because, for the sought postive-energy states
that correspond to particles (in this case, electrons), the ``large''
component has significantly larger norm. For the free particle, the
large and small components of solutions to the Dirac equation are
related through the so-called kinetic balance condition

\[
\psi_{S}=\frac{\hbar}{2ic}\left(\sigma_{1}\frac{\partial}{\partial x_{1}}+\sigma_{2}\frac{\partial}{\partial x_{2}}+\sigma_{3}\frac{\partial}{\partial x_{3}}\right)\psi_{L}.
\]

Using MADNESS with wavelet order 8 and a domain width of 100.0 atomic
units, we examined four initial starting conditions as follows.
\begin{enumerate}
\item The standard starting guess of the non-relativistic hydrogen atom
solution in the first component of the large component with zero in
the second and the small component being determined from the kinetic
balance condition. The energies of the first two iterations are -0.500006490
and -0.500006270e-01, with the second iteration being converged to
all digits shown. Note that the exact Dirac-Coulomb energy for the
hydrogen atom is -0.500006656\ldots , but this is not obtained since
we have fixed the energy parameter in the integral operator at the
non-relativistic value.
\item The initial large and small components from the previous starting
guess are swapped, and the results shown in Figure \ref{fig:swappedconvergence}.
The non-relativistic ground state has zero-angular momentum (i.e.,
is an ``s'' function) and so, by construction, the small component
initial guess constructed by the kinetic-balance condition has unit
angular momentum (i.e., it is a ``p''function). These symmetries
are preserved in the relativistic solution. Hence, the initial guess
constructed by swapping the large and small components of the expected
non-relativistic initial guess in exact arithmetic is exactly orthogonal
to the sought solution. This is apparent in the iteration as displayed
in Figure~\ref{fig:swappedconvergence} --- the projection onto
the exact solution starts at about machine precision (being literally
numerical noise) and increases geometrically (circa 2x per iteration)
until it reaches circa 0.3, whereupon it converges rapidly to one.
The energy starts off large and negative (-1.9e4) but rapidly becomes
positive and decays to close to zero (presumably dominated by a superposition
of unbound electronic states). The energy stays near zero for many
iterations until the projection upon the exact solution approaches
0.1, at which point the energy converges rapidly to the desired electronic
ground state.
\item The third test employed a random initial guess in which function values
at the Gauss-Legendre quadrature points at 3 levels of refinement
in each dimension were set to a random value sampled uniformly in
$[0,10]$ with the resulting function multiplied by a characteristic
function to ensure it was zero on the edge of the computational volume
to satisfy the free-space boundary conditions. Different random functions
were used for each of the four components of the spinor. The energies
of the first three iterations were -1.88e4, -0.04, -0.33, with convergence
to nine significant figures of the energy being smoothly obtained
in 17 iterations.
\item The fourth test employed an initial guess that set one component of
the large component to a spherical Gaussian with exponent 1e8 (i.e.,
a very high energy, unbound, electronic state) and the corresponding
small component as determined by kinetic balance. The energies of
the first 4 iterations were (3.1e4, 9.5e4, 3.2e2, and -0.40 respectively),
and subsequently converged smoothly to nine significant figures of
the energy c in 11 iterations overall.
\end{enumerate}
We observe that independently of the initial guess, the iteration
converges as expected. The only difference is, naturally, in the number
of iterations needed to achieve convergence.

\begin{figure}
\begin{centering}
\includegraphics{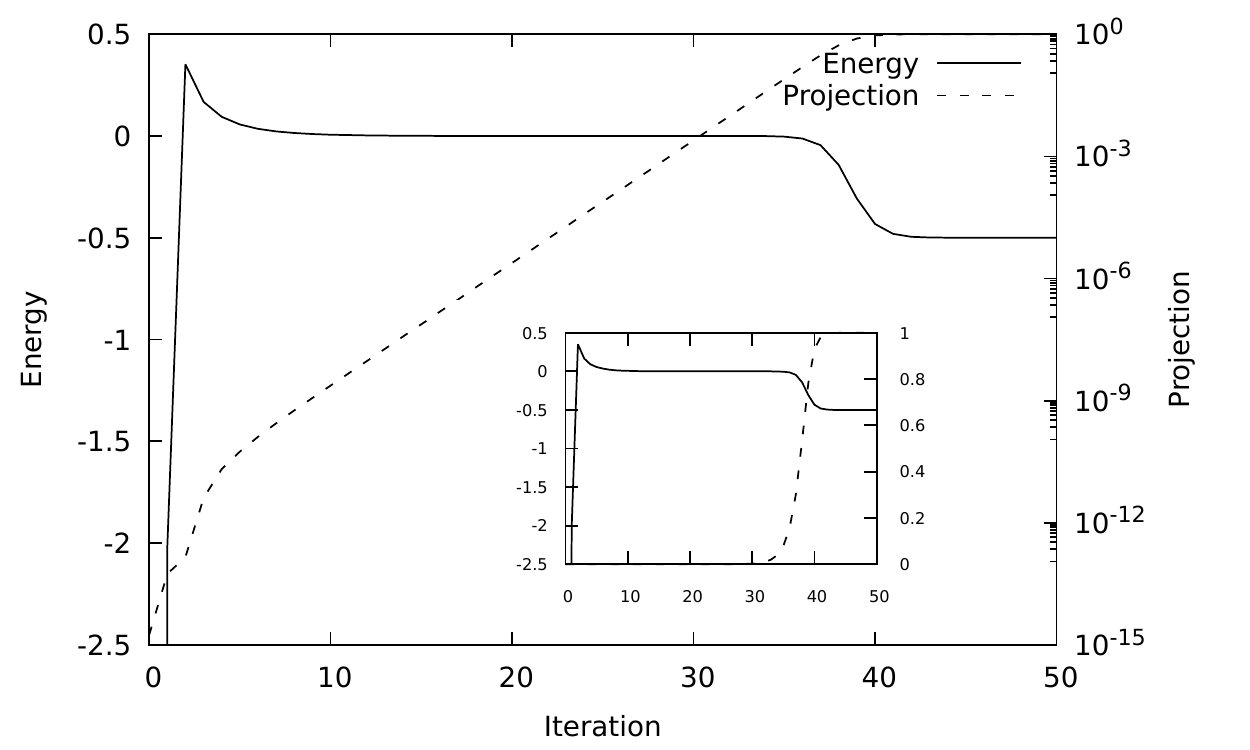}
\par\end{centering}
\caption{Convergence of the iteration starting from an initial guess with the
large and small components of the non-relativistic initial guess swapped
(i.e., one of the small components is set to the non-relativistic
solution of the hydrogen atom and the corresponding large component
is set to what would usually be chosen for the small component to
satisfy kinetic balance). As discussed in the text, in exact arithmetic
this initial guess would be orthogonal to the desired solution. The
inset plot differs only in that it does not use a log scale for the
projection onto the exact solution in order to reveal details close
to convergence. The energy of the first iteration (-3.8e4) is omitted
for clarity.\label{fig:swappedconvergence}}
\end{figure}

\section{Appendix}

\subsection{\label{subsec:Appendix-A:-Compact A^*A}Appendix A: Compact $A^{*}A$
implies $A$ is a compact operator}
\begin{lem}
\label{thm:A*A}If $A^{*}A$ is a compact operator then $A$ is also
a compact operator.
\end{lem}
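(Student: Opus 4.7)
The plan is to use the standard identity $\|Ax\|^{2}=\langle A^{*}Ax,x\rangle$ to transfer compactness from $A^{*}A$ to $A$ via a subsequence argument, rather than invoking the polar decomposition. This avoids constructing a square root or partial isometry and keeps the proof self-contained.

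First, I would take an arbitrary bounded sequence $\{x_{n}\}$ in the underlying Hilbert space, say with $\|x_{n}\|\leq M$. By compactness of $A^{*}A$, I can extract a subsequence $\{x_{n_{k}}\}$ such that $\{A^{*}Ax_{n_{k}}\}$ converges, and is therefore Cauchy. My goal is to show $\{Ax_{n_{k}}\}$ is also Cauchy, from which compactness of $A$ follows.

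The key computation is the identity
\begin{equation*}
\|Ax_{n_{k}}-Ax_{n_{j}}\|^{2}=\langle A^{*}A(x_{n_{k}}-x_{n_{j}}),\,x_{n_{k}}-x_{n_{j}}\rangle,
\end{equation*}
to which I apply Cauchy--Schwarz to obtain
\begin{equation*}
\|Ax_{n_{k}}-Ax_{n_{j}}\|^{2}\leq\|A^{*}A(x_{n_{k}}-x_{n_{j}})\|\cdot\|x_{n_{k}}-x_{n_{j}}\|\leq 2M\,\|A^{*}A(x_{n_{k}}-x_{n_{j}})\|.
\end{equation*}
The right-hand side tends to zero as $j,k\to\infty$ because $\{A^{*}Ax_{n_{k}}\}$ is Cauchy, so $\{Ax_{n_{k}}\}$ is Cauchy and hence convergent in the Hilbert space. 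Since every bounded sequence admits a subsequence whose image under $A$ converges, $A$ is compact.

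There is no real obstacle here; the only subtlety worth flagging is the corollary that the paper actually uses, namely that if $B$ is compact, self-adjoint, and positive definite then $B^{1/2}$ is also compact. This follows from the same lemma applied to $A=B^{1/2}$: indeed $A^{*}A=B^{1/2}B^{1/2}=B$ is compact (and $B^{1/2}$ is well-defined and self-adjoint by the Borel functional calculus for bounded self-adjoint operators), so $B^{1/2}$ itself is compact. I would record this as a brief remark following the proof of the lemma, since it is exactly the form invoked in the spectral argument for $\mathcal{A}(\kappa)$.
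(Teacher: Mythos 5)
Your proof is correct and follows essentially the same route as the paper's: the identity $\left\Vert Ax\right\Vert ^{2}=\left\langle A^{*}Ax,x\right\rangle $, Cauchy--Schwarz, and the bound $\left\Vert A(x_{n_{k}}-x_{n_{j}})\right\Vert ^{2}\le2M\left\Vert A^{*}A(x_{n_{k}}-x_{n_{j}})\right\Vert $ to transfer the Cauchy property from $\{A^{*}Ax_{n_{k}}\}$ to $\{Ax_{n_{k}}\}$. Your closing remark about $B^{1/2}$ is likewise exactly the paper's Corollary~\ref{cor:Corollary compact}.
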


\begin{proof}
We have 
\[
\left\Vert Ax\right\Vert ^{2}=\langle Ax,Ax\rangle=\langle A^{*}Ax,x\rangle\le\left\Vert A^{*}Ax\right\Vert \left\Vert x\right\Vert 
\]
and consider a bounded sequence $\left\{ x_{n}\right\} $, $\left\Vert x_{n}\right\Vert \le M$.
Since $A^{*}A$ is a compact operator, there exist a convergent subsequence
\[
A^{*}Ax_{n_{k}}
\]
which is then a Cauchy sequence. This implies that $Ax_{n_{k}}$ is
also a Cauchy sequence since, using the inequality above, we have
\[
\left\Vert Ax_{n_{k}}-Ax_{n_{l}}\right\Vert ^{2}=\left\Vert A\left(x_{n_{k}}-x_{n_{l}}\right)\right\Vert ^{2}\le2M\left\Vert A^{*}Ax_{n_{k}}-A^{*}Ax_{n_{l}}\right\Vert .
\]
Therefore, the subsequence $Ax_{n_{k}}$ is convergent and $A$ is
a compact operator.
\end{proof}
\begin{cor}
\label{cor:Corollary compact}If ~$A$ is a self-adjoint operator
and $A^{2}$ is compact then $A$ is also a compact operator.
\end{cor}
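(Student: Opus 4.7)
The plan is to deduce this corollary as an immediate specialization of Lemma~\ref{thm:A*A}. Since the preceding lemma already establishes that compactness of $A^{*}A$ forces compactness of $A$, the only task here is to rewrite the hypothesis in a form where that lemma applies directly.

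First I would invoke the self-adjointness assumption: $A^{*} = A$, and therefore $A^{*}A = A^{2}$. So the hypothesis that $A^{2}$ is compact is verbatim the hypothesis that $A^{*}A$ is compact. Second, I would apply Lemma~\ref{thm:A*A} to conclude that $A$ itself is compact. That completes the argument in two lines; no further estimate or construction is needed.

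There is really no obstacle here — the corollary is a one-step consequence of the lemma. The only thing to be careful about is making sure that the notion of self-adjointness used is strong enough that $A^{*}$ is genuinely equal to $A$ as operators (including equality of domains, if one is being pedantic about unbounded operators), so that the substitution $A^{*}A = A^{2}$ is legitimate. In the present setting, where we apply the corollary to $\mathcal{B}(\kappa) = \mathcal{G}^{1/2}(-\mathcal{V})\mathcal{G}^{1/2}$ acting on the Hilbert space $\mathscr{H}_{\kappa,\delta}$, this is automatic because the operator in question is bounded (indeed compact has already been established for $\mathcal{B}(\kappa)$ earlier in the paper), so there are no domain subtleties.
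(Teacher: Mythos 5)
Your proposal is correct and is exactly the argument the paper intends: the corollary is stated without proof precisely because self-adjointness gives $A^{*}A=A^{2}$, and Lemma~\ref{thm:A*A} then applies immediately. Your added remark about domains being a non-issue for the bounded operator $\mathcal{B}(\kappa)$ is accurate but not needed.
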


\subsection{\label{sec:Eigenvalues-of-the product}Appendix B: Eigenvalues of
the product of two self-adjoint operators}

We have
\begin{lem}
Let $A$ be a self-adjoint operator and $B$ a positive (or a negative)
definite self-adjoint operators. Then the spectrum of $AB$ is real.
\end{lem}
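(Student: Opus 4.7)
The plan is to reduce the claim to the standard fact that self-adjoint operators have real spectrum, via a similarity transformation built from a square root of $B$. First I would invoke the spectral theorem for the positive self-adjoint operator $B$ to extract the unique positive self-adjoint square root $B^{1/2}$. Because $B$ is positive definite in the sense used in the paper (so that $B^{-1/2}$ is well defined and at least bounded, as already asserted in the discussion preceding Lemma~\ref{thm:A*A}), the operator $B^{-1/2}$ is available as a legitimate operator on the same Hilbert space.

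Next I would introduce the auxiliary operator
\[
S \;=\; B^{1/2} A B^{1/2}.
\]
Since $A^{*}=A$ and $(B^{1/2})^{*}=B^{1/2}$, a direct computation gives $S^{*}=B^{1/2}A^{*}B^{1/2}=S$, so $S$ is self-adjoint and therefore $\sigma(S)\subset\mathbb{R}$.

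The crux is then the algebraic identity
\[
B^{1/2}\,(AB)\,B^{-1/2} \;=\; B^{1/2} A B \cdot B^{-1/2} \;=\; B^{1/2} A B^{1/2} \;=\; S,
\]
which exhibits $AB$ as similar to the self-adjoint operator $S$. Since similar operators have the same spectrum, we conclude $\sigma(AB)=\sigma(S)\subset\mathbb{R}$, which is exactly the claim. The negative-definite case is handled by applying the positive case to $-B$ and observing $\sigma(AB)=-\sigma\bigl(A(-B)\bigr)$.

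The main obstacle is not algebraic but technical, and concerns the precise meaning of ``positive definite.'' If positive definiteness were only assumed in the pointwise sense $\langle Bx,x\rangle>0$ for $x\ne 0$, then $B^{-1/2}$ need not be bounded, and one would have to interpret the similarity transformation on an appropriate dense domain (and worry about spectral vs.\ pointwise eigenvalue statements). In the paper's setting this difficulty is bypassed by taking positive definiteness in the stronger sense $B\ge cI$ for some $c>0$, which makes $B^{-1/2}$ bounded and renders the similarity argument immediate; this is precisely the hypothesis invoked in the main text when the lemma is applied to $\mathcal{B}(\kappa)=\mathcal{G}^{1/2}(-\mathcal{V})\mathcal{G}^{1/2}$.
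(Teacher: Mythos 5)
Your route is genuinely different from the paper's. The paper does not use $B^{1/2}$ at all for this lemma: it takes an eigenpair $ABx=\lambda x$ with $x\ne 0$, pairs the equation with $Bx$ to get $\langle ABx,Bx\rangle=\lambda\langle x,Bx\rangle$, notes that the left side is real because $\langle Ay,y\rangle$ is real for self-adjoint $A$ (with $y=Bx$), and that $\langle x,Bx\rangle$ is real and nonzero (positive, or negative) by definiteness of $B$, whence $\lambda\in\mathbb{R}$. That argument needs only the weak, pointwise notion of definiteness $\langle Bx,x\rangle>0$ for $x\ne0$, requires no functional calculus, and --- consistent with the accompanying Remark, which shows $\lambda$ is real whenever the eigenvector satisfies $\langle x,Bx\rangle\ne0$ --- it establishes reality only of eigenvalues, not of the full spectrum. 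Your similarity argument, when it applies, is stronger: it gives $\sigma(AB)=\sigma(B^{1/2}AB^{1/2})\subset\mathbb{R}$ for the entire spectrum, and it is the same device the paper itself uses in the second lemma of this appendix and in the main text for $\mathcal{A}(\kappa)$.

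The flaw is in your closing paragraph. You claim the difficulty with an unbounded $B^{-1/2}$ is bypassed because the paper works with $B\ge cI$ for some $c>0$; but the operator to which the lemma is actually applied, $\mathcal{B}(\kappa)=\mathcal{G}^{1/2}(-\mathcal{V})\mathcal{G}^{1/2}$, is \emph{compact} and positive definite, and a compact positive operator on an infinite-dimensional space cannot be bounded below by $cI$ (its eigenvalues accumulate at zero). So $\mathcal{B}^{-1/2}$ is unbounded and only densely defined, and ``similar operators have the same spectrum'' is not immediate for a similarity implemented by an unbounded operator. This is exactly the case your argument must cover, and as written it does not. The paper's quadratic-form argument is immune to this because it never inverts $B$; it only needs $\langle x,Bx\rangle\ne0$ on eigenvectors, and since $AB$ is compact in the application, its nonzero spectrum consists entirely of eigenvalues, so reality of eigenvalues is all that is required. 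To repair your version you would either have to restrict to eigenvalues (for $ABx=\lambda x$ with $\lambda\ne0$ one checks $x\in\operatorname{ran}B$, sets $u=B^{1/2}$-preimage appropriately, and lands back in the self-adjoint operator $B^{1/2}AB^{1/2}$), or argue spectral equality under the unbounded similarity with more care.
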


Consider eigenvalue problem 
\[
ABx=\lambda x,\,\:x\ne0.
\]

\begin{proof}
We have
\begin{equation}
\left\langle BABx,x\right\rangle =\left\langle ABx,Bx\right\rangle =\lambda\left\langle x,Bx\right\rangle ,\label{eq:relation for product of self-adjoints}
\end{equation}
and observe that $\left\langle ABx,Bx\right\rangle $ is real since
for any $y$, $\left\langle Ay,y\right\rangle =\left\langle y,Ay\right\rangle =\overline{\left\langle Ay,y\right\rangle }$.
Also for $x\ne0$, $\left\langle x,Bx\right\rangle =\left\langle Bx,x\right\rangle >0$
since $B$ is a positive self-adjoint operator (less than zero if
negative definite). We conclude that $\lambda$ is real.
\end{proof}
\begin{rem}
Let $A$ and $B$ be self-adjoint and, as before, we arrive at (\ref{eq:relation for product of self-adjoints}).
We have $\left\langle BABx,x\right\rangle $ is real and $\left\langle x,Bx\right\rangle $
is real. We conclude that $\lambda$ is real provided $\left\langle x,Bx\right\rangle \ne0$.
This requirement is satisfied if $B$ is a positive or a negative
definite operator; a weaker assumption is that the eigenvectors of
$AB$ must be such that $\left\langle x,Bx\right\rangle \ne0$. As
a simple example consider 
\[
A=\left(\begin{array}{cc}
0 & 1\\
1 & 0
\end{array}\right)\,\,\,\mbox{and}\,\,\,\,B=\left(\begin{array}{cc}
1 & 0\\
0 & -1
\end{array}\right).
\]
It is easy to check that the eigenvalues of $AB$ in this case are
pure imaginary ($i$ and $-i$) and the eigenvectors of $AB$ are
such that $\left\langle x,Bx\right\rangle =0$. On the other hand,
if 
\[
A=\left(\begin{array}{ccc}
0 & 1 & 0\\
1 & 0 & 0\\
0 & 0 & 1
\end{array}\right)\,\,\,\mbox{and}\,\,\,\,B=\left(\begin{array}{ccc}
1 & 0 & 0\\
0 & 1 & 0\\
0 & 0 & -1
\end{array}\right),
\]
then the eigenvalues of $A$ are $\{-1,1,1\}$, the eigenvalues of
\textbf{$B$} $\{1,1,-1\}$ so that neither $A$ or $B$ are positive/negative
definite. However the eigenvalues of $AB$ are $\{-1,-1,1\}$ and
it is easy to check that the eigenvectors of $AB$ satisfy $\left\langle x,Bx\right\rangle \ne0$.%
\end{rem}

\begin{lem}
Let $A$ be a bounded and $B$ a compact positive (or a negative)
definite self-adjoint operators. Then $AB$ has only discrete real
eigenvalues with the the only possible accumulation point at zero.
\end{lem}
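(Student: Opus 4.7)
The plan is to combine the two structural facts we already have at our disposal: compactness of the product $AB$, and reality of its spectrum, and then invoke the classical Riesz--Schauder spectral theory for compact operators to conclude the discreteness statement.

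First, I would observe that $AB$ is compact. This is immediate: the composition of a bounded operator and a compact operator is compact (either order), so boundedness of $A$ together with compactness of $B$ gives that $AB$ is a compact operator on the underlying Hilbert space.

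Next, I would establish that the spectrum of $AB$ is real. This is essentially a restatement of the earlier lemma in Appendix~\ref{sec:Eigenvalues-of-the product}; I would just replicate the short computation. If $ABx=\lambda x$ with $x\neq 0$, then
\[
\langle BABx,x\rangle=\langle ABx,Bx\rangle=\lambda\langle x,Bx\rangle .
\]
Setting $y=Bx$, the quantity $\langle Ay,y\rangle$ is real because $A$ is self-adjoint, and $\langle x,Bx\rangle>0$ (respectively $<0$) and in particular nonzero because $B$ is positive (respectively negative) definite and $x\neq 0$ (note that $Bx\neq 0$ as well, since otherwise $\lambda x=ABx=0$ would force $\lambda=0$ and there is nothing to prove; the $\lambda\neq 0$ eigenvalue case therefore has $Bx\neq 0$). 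Dividing gives $\lambda\in\mathbb{R}$.

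Finally, I would invoke the spectral theorem for compact operators on a Hilbert space (Riesz--Schauder): every nonzero point of the spectrum of a compact operator is an eigenvalue of finite algebraic multiplicity, and the set of such eigenvalues is at most countable with $0$ as the only possible accumulation point. Combining this with the reality just established yields the claim: the spectrum of $AB$ consists of real, discrete eigenvalues with $0$ as the only possible accumulation point. The one subtlety worth noting, which I would flag but not belabor, is that we are making no claim about whether $0$ itself is an eigenvalue --- only about the location of the nonzero spectrum --- and in particular $B$ being positive definite but compact does not make $B^{-1/2}$ bounded, so one cannot naively similarity-transform $AB$ into the self-adjoint operator $B^{1/2}AB^{1/2}$; the direct quadratic-form argument above sidesteps this issue, which is the main (small) obstacle.
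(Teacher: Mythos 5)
Your proof is correct, and it takes a genuinely different route from the paper's. The paper symmetrizes: it forms $B^{1/2}AB^{1/2}$, notes this is compact (via Corollary~\ref{cor:Corollary compact}) and self-adjoint, applies the spectral theorem, and then transfers the spectrum back to $AB$ via the factorization $AB=B^{-1/2}\left(B^{1/2}AB^{1/2}\right)B^{1/2}$. You instead observe directly that $AB$ is compact (bounded times compact), import reality of the eigenvalues from the quadratic-form computation of the preceding lemma, and finish with Riesz--Schauder. Your route is the more careful one on exactly the point you flag: since $B$ is compact and positive definite on an infinite-dimensional space, its eigenvalues accumulate at zero and $B^{-1/2}$ is unbounded, so the paper's assertion that the similarity preserves the spectrum is not immediate as stated. (It can be repaired at the level of nonzero eigenvalues: if $ABx=\lambda x$ with $\lambda\ne0$ then $B^{1/2}x\ne0$ is an eigenvector of $B^{1/2}AB^{1/2}$, and conversely any eigenvector of $B^{1/2}AB^{1/2}$ for $\lambda\ne0$ lies in the range of $B^{1/2}$, so the nonzero eigenvalues coincide --- but the paper does not say this.) What the paper's symmetrization buys, and your argument does not, is the full strength of the spectral theorem for the self-adjoint compact operator $B^{1/2}AB^{1/2}$ --- in particular a complete orthonormal eigenbasis for the symmetrized operator --- which is the structural fact the main text leans on when discussing $\mathcal{A}(\kappa)$; your argument delivers precisely the spectral statement of the lemma and nothing more, which is all the lemma claims.
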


\begin{proof}
Let us consider $B^{1/2}AB^{1/2}$. Since $B$ is positive definite,
$B^{1/2}$ exist and is self-adjoint. According to Corollary~\ref{cor:Corollary compact}
$B^{1/2}$ is a compact operator and, therefore, $B^{1/2}AB^{1/2}$
is also compact. Since it is self-adjoint, the spectral theorem for
compact operators in a Hilbert space is applicable and we know that
its spectrum is discrete, with the only possible accumulation point
at zero.  Since $AB=B^{-1/2}\left(B^{1/2}AB^{1/2}\right)B^{1/2}$,
it has the same spectrum as $B^{1/2}AB^{1/2}$.
\end{proof}

\subsection{Appendix C: \label{subsec:Appendix-C:-Monotone non-relativistic}Monotone
dependence of eigenvalue on the parameter in non-relativistic case}

Let us consider 

\begin{equation}
\lambda_{\mu}\psi_{\mu}=-G_{\mu}V\psi_{\mu},\,\,\,,\lambda_{\mu}>0,\label{eq:auxillary eig problem for psi , non-relativistic}
\end{equation}
or 
\begin{equation}
\lambda_{\mu}\phi_{\mu}=-G_{\mu}^{1/2}VG_{\mu}^{1/2}\phi_{\mu},\,\,\,\left\Vert \phi_{\mu}\right\Vert =1,\,\,\,,\lambda_{\mu}>0,\label{eq:auxillary eig problem for psi , non-relativistic-1}
\end{equation}
where $G_{\mu}$ is the Green's function 
\[
G_{\mu}\left(\mathbf{p}\right)=\frac{1}{\mu^{2}+\left\Vert \mathbf{p}\right\Vert ^{2}},
\]
in the momentum space and $\phi_{\mu}=G_{\mu}^{-1/2}\psi_{\mu}$.
Here we assume that the potential $V$ does not depend on $\mu$.
Since within the iteration potential does depend on $\mu$, our conclusion
is applicable once the iteration converged so that the potential can
be fixed.

We use 
\[
\frac{dG_{\mu}}{d\mu}=-2\mu G_{\mu}^{2}
\]
in our derivations below.
\begin{lem}
\label{lem:If-the-potential}If the potential $V$ does not depend
on the parameter $\mu$, then the derivative of an eigenvalue with
respect to the parameter $\mu$ is negative,
\begin{equation}
\frac{d\lambda_{\mu}}{d\mu}=-2\mu\lambda_{\mu}\left\Vert G_{\mu}^{1/2}\phi_{\mu}\right\Vert .\label{derivative of lambda}
\end{equation}
\end{lem}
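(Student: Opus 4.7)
The plan is to apply first-order perturbation theory in $\mu$ to the self-adjoint eigenvalue problem
\[
\lambda_\mu \phi_\mu = -G_\mu^{1/2} V G_\mu^{1/2} \phi_\mu, \qquad \|\phi_\mu\|=1,
\]
and then to simplify the resulting expression using the eigenvalue equation itself. Differentiating both sides with respect to $\mu$, taking the inner product with $\phi_\mu$, and using the normalization condition together with the self-adjointness of $G_\mu^{1/2} V G_\mu^{1/2}$, the terms involving $d\phi_\mu/d\mu$ cancel. What remains, after using self-adjointness of $V$ and of $G_\mu^{1/2}$, is
\[
\frac{d\lambda_\mu}{d\mu} = -2\,\mathrm{Re}\,\Bigl\langle V G_\mu^{1/2}\phi_\mu,\ \tfrac{dG_\mu^{1/2}}{d\mu}\phi_\mu\Bigr\rangle.
\]

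The next step is to compute $dG_\mu^{1/2}/d\mu$ in momentum space, where $G_\mu^{1/2}$ is multiplication by $(\mu^2+\|\mathbf{p}\|^2)^{-1/2}$. All operators involved are then multiplication operators in momentum space, so they commute, and a direct differentiation gives
\[
\frac{dG_\mu^{1/2}}{d\mu} = -\mu\,G_\mu\,G_\mu^{1/2}.
\]
Substituting into the previous display and pulling $G_\mu$ across by self-adjointness yields
\[
\frac{d\lambda_\mu}{d\mu} = 2\mu\,\mathrm{Re}\,\bigl\langle G_\mu V G_\mu^{1/2}\phi_\mu,\ G_\mu^{1/2}\phi_\mu\bigr\rangle.
\]
Now I close the loop using the auxiliary eigenvalue problem in the $\psi_\mu$ form, equation \eqref{eq:auxillary eig problem for psi , non-relativistic}: with $\psi_\mu = G_\mu^{1/2}\phi_\mu$, we have $G_\mu V\psi_\mu = -\lambda_\mu \psi_\mu$, so $G_\mu V G_\mu^{1/2}\phi_\mu = -\lambda_\mu G_\mu^{1/2}\phi_\mu$. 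Substituting produces
\[
\frac{d\lambda_\mu}{d\mu} = -2\mu\,\lambda_\mu\,\|G_\mu^{1/2}\phi_\mu\|^{2},
\]
which gives the stated identity (the statement appears to have a typographical omission of the square), and the sign follows since $\mu>0$ and $\lambda_\mu>0$.

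The only genuine subtlety is justifying differentiability of $\lambda_\mu$ and $\phi_\mu$ in $\mu$; I would handle this by invoking standard analytic perturbation theory for the compact self-adjoint family $G_\mu^{1/2} V G_\mu^{1/2}$ (compactness is already established earlier in the paper), noting that the map $\mu \mapsto G_\mu^{1/2} V G_\mu^{1/2}$ is real-analytic in operator norm for $\mu>0$ so that isolated simple eigenvalues depend analytically on $\mu$. For multiple eigenvalues one chooses an analytic branch, which is standard. The remaining manipulations are purely algebraic and require only the self-adjointness of $V$, $G_\mu$, and the fact that in momentum space all of these commute, so no deeper obstacle is expected.
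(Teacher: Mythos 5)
Your proposal is correct and follows essentially the same route as the paper: both are Hellmann--Feynman--type differentiations of the eigenvalue equation in which the terms involving the derivative of the eigenfunction cancel by self-adjointness and the eigenvalue equation is reused to reduce the remaining term to $-2\mu\lambda_\mu\|G_\mu^{1/2}\phi_\mu\|^2$ (your pairing with $\phi_\mu$ in the $\phi$-picture is identical to the paper's pairing with $G_\mu^{-1}\psi_\mu$ in the $\psi$-picture, since $\langle\psi_\mu,G_\mu^{-1}\psi_\mu\rangle=\|\phi_\mu\|^2=1$). You are also right that the exponent $2$ is missing from the displayed formula in the lemma statement --- the paper's own proof ends with $\|\psi_\mu\|^2$ --- and your remark on analytic perturbation theory supplies a justification of differentiability that the paper leaves implicit.
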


\begin{proof}
Differentiating (\ref{eq:auxillary eig problem for psi , non-relativistic}),
we have
\[
\frac{d\lambda_{\mu}}{d\mu}\psi_{\mu}+\lambda_{\mu}\frac{d\psi_{\mu}}{d\mu}=2\mu G_{\mu}^{2}V\psi_{\mu}-G_{\mu}V\frac{d\psi_{\mu}}{d\mu},
\]
and computing the inner product with $G_{\mu}^{-1}\psi_{\mu}$, obtain
\[
\frac{d\lambda_{\mu}}{d\mu}\left\langle \psi_{\mu},G_{\mu}^{-1}\psi_{\mu}\right\rangle +\lambda_{\mu}\left\langle \frac{d\psi_{\mu}}{d\mu},G_{\mu}^{-1}\psi_{\mu}\right\rangle =2\mu\left\langle G_{\mu}^{2}V\psi_{\mu},G_{\mu}^{-1}\psi_{\mu}\right\rangle -\left\langle G_{\mu}V\frac{d\psi_{\mu}}{d\mu},G_{\mu}^{-1}\psi_{\mu}\right\rangle .
\]
Since

\[
\lambda_{\mu}G_{\mu}^{-1}\psi_{\mu}=-V\psi_{\mu}
\]
and $G_{\mu}$ and $V$ are symmetric operators, we have
\[
\frac{d\lambda_{\mu}}{d\mu}\left\langle \psi_{\mu},G_{\mu}^{-1}\psi_{\mu}\right\rangle =2\mu\left\langle G_{\mu}^{2}V\psi_{\mu},G_{\mu}^{-1}\psi_{\mu}\right\rangle 
\]
or, using (\ref{eq:auxillary eig problem for psi , non-relativistic}),
\[
\frac{d\lambda_{\mu}}{d\mu}\left\langle \psi_{\mu},G_{\mu}^{-1}\psi_{\mu}\right\rangle =-2\mu\lambda_{\mu}\left\langle G_{\mu}\psi_{\mu},G_{\mu}^{-1}\psi_{\mu}\right\rangle =-2\mu\lambda_{\mu}\left\Vert \psi_{\mu}\right\Vert ^{2}
\]
yielding (\ref{derivative of lambda}).
\end{proof}

\subsection{Appendix D: \label{subsec:Appendix-D:-Dependence of eig relativistic}Dependence
of eigenvalue on the parameter in the relativistic case}

Next we consider the auxiliary eigenvalue problem for the Dirac equation
\begin{equation}
\lambda\left(\kappa\right)\psi\left(\kappa\right)=-\frac{1}{\hbar^{2}c^{2}}\left(\mathcal{H}_{0}+E\left(\kappa\right)\mathcal{I}\right)\mathcal{G}\left(\kappa\right)\mathcal{V}\psi\left(\kappa\right),\label{eq:auxiliary eig problem for Dirac}
\end{equation}
where $E<mc^{2}$
\[
E=\left(m^{2}c^{4}-\kappa^{2}c^{2}\hbar^{2}\right)^{1/2},
\]
and 
\[
\frac{dE}{d\kappa}=-\kappa c^{2}\hbar^{2}\left(m^{2}c^{4}-\kappa^{2}c^{2}\hbar^{2}\right)^{-1/2}=-\kappa\frac{c^{2}\hbar^{2}}{E}.
\]
In (\ref{eq:auxiliary eig problem for Dirac}) 
\[
\mathcal{G}\left(\kappa\right)=G\left(\kappa\right)\mathcal{I},
\]
where $\mathcal{I}$ is the $4\times4$ identity matrix, $\mathcal{V}$
is $4\times4$ potential operator 
\[
\mathcal{V}=\left(\begin{array}{cccc}
V & 0 & 0 & 0\\
0 & V & 0 & 0\\
0 & 0 & V & 0\\
0 & 0 & 0 & V
\end{array}\right)
\]
and 
\[
\mathcal{H}_{0}=\left(\begin{array}{cc}
\sigma_{0}mc^{2} & \frac{\hbar c}{i}\left(\sigma_{1}\frac{\partial}{\partial x_{1}}+\sigma_{2}\frac{\partial}{\partial x_{2}}+\sigma_{3}\frac{\partial}{\partial x_{3}}\right)\\
\frac{\hbar c}{i}\left(\sigma_{1}\frac{\partial}{\partial x_{1}}+\sigma_{2}\frac{\partial}{\partial x_{2}}+\sigma_{3}\frac{\partial}{\partial x_{3}}\right) & -\sigma_{0}mc^{2}
\end{array}\right).
\]
In the momentum space 
\[
\widehat{\frac{1}{\hbar^{2}c^{2}}\left(\mathcal{H}_{0}+E\mathcal{I}\right)}=\left(\begin{array}{cccc}
\frac{m}{\hbar^{2}}+\frac{E}{\hbar^{2}c^{2}} & 0 & \frac{p_{3}}{\hbar c} & \frac{p_{1}-ip_{2}}{\hbar c}\\
0 & \frac{m}{\hbar^{2}}+\frac{E}{\hbar^{2}c^{2}} & \frac{p_{1}+ip_{2}}{\hbar c} & \frac{-p_{3}}{\hbar c}\\
\frac{p_{3}}{\hbar c} & \frac{p_{1}-ip_{2}}{\hbar c} & -\frac{m}{\hbar^{2}}+\frac{E}{\hbar^{2}c^{2}} & 0\\
\frac{p_{1}+ip_{2}}{\hbar c} & \frac{-p_{3}}{\hbar c} & 0 & -\frac{m}{\hbar^{2}}+\frac{E}{\hbar^{2}c^{2}}
\end{array}\right)
\]
and this operator commutes with $\mathcal{G}\left(\kappa\right)$.
\begin{lem}
If the potential $\mathcal{V}$ does not depend on the parameter $\mu$,
then for the derivative of an eigenvalue in (\ref{eq:auxiliary eig problem for Dirac}),
we have
\[
\frac{d\lambda\left(\kappa\right)}{d\kappa}=\kappa\lambda\left(\kappa\right)\frac{1}{\hbar^{2}c^{2}}\frac{1}{\left\langle \psi,\left(\mathcal{H}_{0}-E\mathcal{I}\right)\psi\right\rangle }\left(-\frac{1}{E\left(\kappa\right)}\left\Vert \mathcal{G}^{1/2}\mathcal{H}_{0}\psi\right\Vert ^{2}+E\left(\kappa\right)\left\Vert \mathcal{G}^{1/2}\psi\right\Vert ^{2}\right).
\]
\end{lem}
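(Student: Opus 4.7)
The plan is to mimic the strategy of Appendix~C, Lemma~\ref{lem:If-the-potential}, exploiting the observation that $(\mathcal{H}_0 - E\mathcal{I})\psi$ plays the role of $G_\mu^{-1}\psi_\mu$ in the non-relativistic derivation. The preparatory identity is obtained by applying $(\mathcal{H}_0 - E\mathcal{I})$ to both sides of (\ref{eq:auxiliary eig problem for Dirac}): using $(\mathcal{H}_0 + E\mathcal{I})(\mathcal{H}_0 - E\mathcal{I}) = \mathcal{H}_0^2 - E^2\mathcal{I}$ together with $\mathcal{G}(\mathcal{H}_0^2 - E^2\mathcal{I}) = \hbar^2 c^2 \mathcal{I}$ (and that $\mathcal{G}$ commutes with $\mathcal{H}_0$ since $\mathcal{G}=G\mathcal{I}$ with $G$ a function of $-\Delta$), one obtains
\[
\lambda(\kappa)(\mathcal{H}_0 - E(\kappa)\mathcal{I})\psi(\kappa) = -\mathcal{V}\psi(\kappa).
\]
This is the relativistic analogue of $\lambda_\mu G_\mu^{-1}\psi_\mu = -V\psi_\mu$ exploited in Appendix~C and is the central trick of the proof.

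Next, differentiate (\ref{eq:auxiliary eig problem for Dirac}) with respect to $\kappa$, using $\mathcal{G}' = -2\kappa \mathcal{G}^2$ and $E'(\kappa) = -\kappa c^2\hbar^2/E$. The RHS of the differentiated equation splits into three pieces: one containing $E'\mathcal{G}\mathcal{V}\psi$, one containing $\mathcal{G}'\mathcal{V}\psi$, and one containing $\mathcal{V}\psi'$. Take the inner product of the entire differentiated equation with the test function $(\mathcal{H}_0 - E\mathcal{I})\psi$. In the $\psi'$-piece, pull $(\mathcal{H}_0 + E\mathcal{I})$ to the right via self-adjointness, combine with $(\mathcal{H}_0 - E\mathcal{I})$ to get $\mathcal{H}_0^2 - E^2\mathcal{I}$, and collapse using $\mathcal{G}(\mathcal{H}_0^2 - E^2\mathcal{I}) = \hbar^2c^2\mathcal{I}$; this piece becomes $-\langle \psi',\mathcal{V}\psi\rangle$. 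On the LHS, the term $\lambda\langle \psi',(\mathcal{H}_0 - E\mathcal{I})\psi\rangle$ equals $\langle \psi',\lambda(\mathcal{H}_0 - E\mathcal{I})\psi\rangle = -\langle \psi',\mathcal{V}\psi\rangle$ by the preparatory identity, so the two $\psi'$-contributions cancel, removing the unknown $\psi'$ from the equation entirely.

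The two remaining RHS terms are then simplified using $\mathcal{V}\psi = -\lambda(\mathcal{H}_0 - E\mathcal{I})\psi$ to replace $\mathcal{V}\psi$ on the right. The $E'$-term becomes, up to the constant factors coming from $E'$, a multiple of $\|\mathcal{G}^{1/2}(\mathcal{H}_0 - E\mathcal{I})\psi\|^2$; the $\mathcal{G}'$-term, after collapsing one factor of $\mathcal{G}$ against $\mathcal{H}_0^2 - E^2\mathcal{I}$ via the same identity used above, becomes a multiple of $\langle \mathcal{G}^{1/2}(\mathcal{H}_0 - E\mathcal{I})\psi,\mathcal{G}^{1/2}\psi\rangle$. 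Expanding $(\mathcal{H}_0 - E\mathcal{I})\psi = \mathcal{H}_0\psi - E\psi$ inside both, using that $\mathcal{G}$ and $\mathcal{H}_0$ commute and that $\langle \mathcal{G}^{1/2}\mathcal{H}_0\psi,\mathcal{G}^{1/2}\psi\rangle = \langle \mathcal{G}\mathcal{H}_0\psi,\psi\rangle$ is real (since $\mathcal{G}\mathcal{H}_0$ is self-adjoint), the cross terms involving $\langle \mathcal{G}^{1/2}\mathcal{H}_0\psi,\mathcal{G}^{1/2}\psi\rangle$ cancel between the two pieces. What remains is exactly $\kappa\lambda(-E^{-1}\|\mathcal{G}^{1/2}\mathcal{H}_0\psi\|^2 + E\|\mathcal{G}^{1/2}\psi\|^2)$ times the appropriate scalar factor. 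Dividing by $\langle\psi,(\mathcal{H}_0 - E\mathcal{I})\psi\rangle$ (nonzero as an assumption, analogous to $\|\psi_\mu\|^2\ne 0$ in Appendix~C) yields the claimed formula.

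The main obstacle is bookkeeping rather than conceptual difficulty: the 4-spinor structure, the noncommutativity of $\mathcal{V}$ with $\mathcal{G}$ and $\mathcal{H}_0$, and the many factors of $\hbar$ and $c$ make it easy to drop a sign or a constant. The conceptually delicate point is recognizing at the outset that $(\mathcal{H}_0 - E\mathcal{I})\psi$, rather than $\mathcal{G}^{-1}\psi$, is the correct test vector; this choice is what both produces the real bilinear form $\langle\psi,(\mathcal{H}_0 - E\mathcal{I})\psi\rangle$ in the denominator and forces the $\psi'$-terms to cancel exactly.
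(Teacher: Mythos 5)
Your proposal is correct and follows essentially the same route as the paper: differentiate the auxiliary eigenvalue equation and test against $\left(\mathcal{H}_{0}-E\mathcal{I}\right)\psi$, which is exactly the paper's test vector $\mathcal{G}^{-1}\left(\mathcal{H}_{0}+E\mathcal{I}\right)^{-1}\psi$ up to the constant factor $\hbar^{2}c^{2}$, so the cancellation of the $\psi'$ terms and of the cross terms $\left\langle \mathcal{G}\mathcal{H}_{0}\psi,\psi\right\rangle$ proceeds identically. The only cosmetic difference is that you simplify the test vector at the outset rather than at the end.
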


As in the non-relativistic case, the expression for the derivative
does not have explicit dependence on the potential. However, in this
case the derivative may not be sign definite. In both, non-relativistic
and relativistic cases, expressions for the derivative can be used
to set up a combination of the power iteration and the Newton method
to solve for $\lambda\left(\kappa\right)=1$ in order to compute the
bound states
\begin{proof}
We have
\begin{equation}
\frac{d\lambda\left(\kappa\right)}{d\kappa}\psi+\lambda\left(\kappa\right)\frac{d\psi}{d\kappa}=-\frac{1}{\hbar^{2}c^{2}}\frac{dE}{d\kappa}\mathcal{G}\mathcal{V}\psi-\frac{1}{\hbar^{2}c^{2}}\left(\mathcal{H}_{0}+E\mathcal{I}\right)\frac{d\mathcal{G}}{d\kappa}\mathcal{V}\psi-\frac{1}{\hbar^{2}c^{2}}\left(\mathcal{H}_{0}+E\mathcal{I}\right)\mathcal{G}\mathcal{V}\frac{d\psi}{d\kappa}\label{eq:differentiated equation}
\end{equation}
and evaluate the inner product of both sides of (\ref{eq:differentiated equation})
with $\mathcal{G}^{-1}\left(\mathcal{H}_{0}+E\mathcal{I}\right)^{-1}\psi$.
We first show that 
\[
\left\langle \lambda\left(\kappa\right)\frac{d\psi}{d\kappa},\mathcal{G}^{-1}\left(\mathcal{H}_{0}+E\mathcal{I}\right)^{-1}\psi\right\rangle =-\left\langle \frac{1}{\hbar^{2}c^{2}}\left(\mathcal{H}_{0}+E\mathcal{I}\right)\mathcal{G}\mathcal{V}\frac{d\psi}{d\kappa},\mathcal{G}^{-1}\left(\mathcal{H}_{0}+E\mathcal{I}\right)^{-1}\psi\right\rangle .
\]
Applying $\mathcal{G}^{-1}\left(\mathcal{H}_{0}+E\mathcal{I}\right)^{-1}$
on both sides of (\ref{eq:auxiliary eig problem for Dirac}), we have
\[
\lambda\left(\kappa\right)\mathcal{G}^{-1}\left(\kappa\right)\left(\mathcal{H}_{0}+E\left(\kappa\right)\mathcal{I}\right)^{-1}\psi\left(\kappa\right)=-\frac{1}{\hbar^{2}c^{2}}\mathcal{V}\psi\left(\kappa\right)
\]
so that 
\[
\left\langle \lambda\left(\kappa\right)\frac{d\psi}{d\kappa},\mathcal{G}^{-1}\left(\mathcal{H}_{0}+E\mathcal{I}\right)^{-1}\psi\right\rangle =-\frac{1}{\hbar^{2}c^{2}}\left\langle \frac{d\psi}{d\kappa},\mathcal{V}\psi\right\rangle .
\]
Since $\mathcal{H}_{0}+E\mathcal{I}$ and $\mathcal{G}$ commute,
we also have 
\[
-\left\langle \frac{1}{\hbar^{2}c^{2}}\left(\mathcal{H}_{0}+E\mathcal{I}\right)\mathcal{G}\mathcal{V}\frac{d\psi}{d\kappa},\mathcal{G}^{-1}\left(\mathcal{H}_{0}+E\mathcal{I}\right)^{-1}\psi\right\rangle =-\frac{1}{\hbar^{2}c^{2}}\left\langle \frac{d\psi}{d\kappa},\mathcal{V}\psi\right\rangle .
\]
We thus obtain
\begin{eqnarray}
\frac{d\lambda\left(\kappa\right)}{d\kappa}\left\langle \psi,\mathcal{G}^{-1}\left(\mathcal{H}_{0}+E\mathcal{I}\right)^{-1}\psi\right\rangle  & = & -\frac{1}{\hbar^{2}c^{2}}\frac{dE}{d\kappa}\left\langle \mathcal{G}\mathcal{V}\psi,\mathcal{G}^{-1}\left(\mathcal{H}_{0}+E\mathcal{I}\right)^{-1}\psi\right\rangle +\nonumber \\
 &  & \frac{2\kappa}{\hbar^{2}c^{2}}\left\langle \left(\mathcal{H}_{0}+E\mathcal{I}\right)\mathcal{G}^{2}\mathcal{V}\psi,\mathcal{G}^{-1}\left(\mathcal{H}_{0}+E\mathcal{I}\right)^{-1}\psi\right\rangle .\label{eq:intermediate eq}
\end{eqnarray}
Using 
\[
-\frac{1}{\hbar^{2}c^{2}}\mathcal{G}\mathcal{V}\psi=-\frac{1}{\hbar^{2}c^{2}}\left(\mathcal{H}_{0}+E\mathcal{I}\right)^{-1}\left(\mathcal{H}_{0}+E\mathcal{I}\right)\mathcal{G}\mathcal{V}\psi=\left(\mathcal{H}_{0}+E\mathcal{I}\right)^{-1}\psi,
\]
\[
\left(\mathcal{H}_{0}+E\mathcal{I}\right)^{-1}=\frac{1}{\hbar^{2}c^{2}}\mathcal{G}\left(\mathcal{H}_{0}-E\mathcal{I}\right),
\]
and
\[
\mathcal{G}^{-1}\left(\mathcal{H}_{0}+E\mathcal{I}\right)^{-1}=\frac{1}{\hbar^{2}c^{2}}\left(\mathcal{H}_{0}-E\mathcal{I}\right),
\]
we have for the first term on the right hand side of (\ref{eq:intermediate eq})
\begin{eqnarray*}
-\frac{1}{\hbar^{2}c^{2}}\frac{dE}{d\kappa}\left\langle \mathcal{G}\mathcal{V}\psi,\mathcal{G}^{-1}\left(\mathcal{H}_{0}+E\mathcal{I}\right)^{-1}\psi\right\rangle  & = & -\kappa\lambda\left(\kappa\right)\frac{c^{2}\hbar^{2}}{E}\left\langle \left(\mathcal{H}_{0}+E\mathcal{I}\right)^{-1}\psi,\mathcal{G}^{-1}\left(\mathcal{H}_{0}+E\mathcal{I}\right)^{-1}\psi\right\rangle \\
 & = & -\kappa\lambda\left(\kappa\right)\frac{1}{E}\frac{1}{\hbar^{2}c^{2}}\left\langle \mathcal{G}\left(\mathcal{H}_{0}-E\mathcal{I}\right)\psi,\left(\mathcal{H}_{0}-E\mathcal{I}\right)\psi\right\rangle \\
 & = & -\kappa\lambda\left(\kappa\right)\frac{1}{E}\frac{1}{\hbar^{2}c^{2}}\left(\left\langle \mathcal{G}\mathcal{H}_{0}\psi,\mathcal{H}_{0}\psi\right\rangle -2E\left\langle \mathcal{G}\mathcal{H}_{0}\psi,\psi\right\rangle +E^{2}\left\langle \mathcal{G}\psi,\psi\right\rangle \right)\\
 & = & -\kappa\lambda\left(\kappa\right)\frac{1}{E}\frac{1}{\hbar^{2}c^{2}}\left(\left\Vert \mathcal{G}^{1/2}\mathcal{H}_{0}\psi\right\Vert ^{2}-2E\left\langle \mathcal{G}\mathcal{H}_{0}\psi,\psi\right\rangle +E^{2}\left\Vert \mathcal{G}^{1/2}\psi\right\Vert ^{2}\right),
\end{eqnarray*}
and, for the second,
\begin{eqnarray*}
\frac{2\kappa}{\hbar^{2}c^{2}}\left\langle \left(\mathcal{H}_{0}+E\mathcal{I}\right)\mathcal{G}^{2}\mathcal{V}\psi,\mathcal{G}^{-1}\left(\mathcal{H}_{0}+E\mathcal{I}\right)^{-1}\psi\right\rangle  & = & -2\kappa\lambda\left(\kappa\right)\left\langle \mathcal{G\psi},\mathcal{G}^{-1}\left(\mathcal{H}_{0}+E\mathcal{I}\right)^{-1}\psi\right\rangle \\
 & = & -2\kappa\lambda\left(\kappa\right)\left\langle \mathcal{\psi},\left(\mathcal{H}_{0}+E\mathcal{I}\right)^{-1}\psi\right\rangle \\
 & = & -2\kappa\lambda\left(\kappa\right)\frac{1}{\hbar^{2}c^{2}}\left\langle \mathcal{\psi},\mathcal{G}\left(\mathcal{H}_{0}-E\mathcal{I}\right)\psi\right\rangle \\
 & = & -2\kappa\lambda\left(\kappa\right)\frac{1}{\hbar^{2}c^{2}}\left(\left\langle \mathcal{\psi},\mathcal{G}\mathcal{H}_{0}\psi\right\rangle -E\left\Vert \mathcal{G}^{1/2}\psi\right\Vert ^{2}\right).
\end{eqnarray*}
Also using 
\[
\left\langle \psi,\mathcal{G}^{-1}\left(\mathcal{H}_{0}+E\mathcal{I}\right)^{-1}\psi\right\rangle =\frac{1}{\hbar^{2}c^{2}}\left\langle \psi,\left(\mathcal{H}_{0}-E\mathcal{I}\right)\psi\right\rangle ,
\]
we obtain from (\ref{eq:intermediate eq})
\[
\frac{d\lambda\left(\kappa\right)}{d\kappa}\left\langle \psi,\left(\mathcal{H}_{0}-E\mathcal{I}\right)\psi\right\rangle =\kappa\lambda\left(\kappa\right)\frac{1}{\hbar^{2}c^{2}}\left(-\frac{1}{E\left(\kappa\right)}\left\Vert \mathcal{G}^{1/2}\mathcal{H}_{0}\psi\right\Vert ^{2}+E\left(\kappa\right)\left\Vert \mathcal{G}^{1/2}\psi\right\Vert ^{2}\right).
\]
\end{proof}

\bibliographystyle{plain}


\end{document}